\newcommand{\keywords}[1]{\par\noindent\textbf{Keywords:} #1}
\author[1]{Jiangjing Zhou}
\author[2,*]{Ovanes Petrosian}
\author[2,3]{Ye Zhang}
\author[1,*]{Hongwei Gao}
\affil[1]{School of Mathematics and Statistics, Qingdao University, 
No. 308 Ningxia Road, Qingdao, 266071, Shandong Province, China. {zhoujiangjingqdu@163.com};{gaohongwei@qdu.edu.cn}}
\affil[2]{MSU-BIT-SMBU Joint Research Center of Applied Mathematics, 
Shenzhen MSU-BIT University, 
1 International University Park Road, Dayun New Town, Longgang District, Shenzhen, 518115, Guangdong Province, China. 
{petrosian.ovanes@yandex.ru}; {ye.zhang@smbu.edu.cn}}
\affil[3]{School of Mathematics and Statistics, 
Beijing Institute of Technology, 
5 Zhongguancun South Street, Haidian District, Beijing, 100081, China.}
\title{Nash Equilibrium and Belief Evolution in Differential Games}
\date{}
\newtheorem{theorem}{Theorem}
\newtheorem{proposition}{Proposition}%
\newtheorem{definition}{Definition}%
\begin{document}
\maketitle
\begin{abstract}
This study investigates differential games with motion-payoff uncertainty in continuous-time settings. We propose a framework where players update their beliefs about uncertain parameters using continuous Bayesian updating. Theoretical proofs leveraging key probability theorems demonstrate that players' beliefs converge to the true parameter values, ensuring stability and accuracy in long-term estimations. We further derive Nash Equilibrium strategies with continuous Bayesian updating for players, emphasizing the role of belief updates in decision-making processes. Additionally, we establish the convergence of Nash Equilibrium strategies with continuous Bayesian updating. The efficacy of both continuous and dynamic Bayesian updating is examined in the context of pollution control games, showing convergence in players' estimates under small time intervals in discrete scenarios.
\end{abstract}
\keywords{Continuous Bayesian Updating, Uncertainty and Learning, Nash equilibrium with continuous Bayesian updating, Hamilton-Jacobi-Bellman equation}

\section{Introduction}
Differential games \cite{dockner2000differential,friedman2013differential} involve multiple players controlling a dynamical system through their actions, which are described by differential state equations. These games evolve over a continuous-time horizon, where each player seeks to optimize an objective function that depends on the system’s state, their own actions, and potentially the actions of others. In this study, we extend the classic differential game model to scenarios involving motion-payoff uncertainty, where players face uncertainties in both the dynamic equations and the payoff functions, and are unaware of certain parameters in the environment or in their opponents’ payoff structures.

In dynamic games, optimal control techniques are generalized to accommodate multiple players with both shared and conflicting interests. As shown in \cite{lewis2012optimal}, if a set of interconnected partial differential equations—commonly referred to as the Hamilton-Jacobi-Bellman (HJB) equations—has solutions, then a Nash equilibrium can be achieved. At this equilibrium, no player can improve their outcome by unilaterally changing their strategy. However, traditional dynamic game models often assume that all players possess complete knowledge of the game. In many real-world scenarios, players face rapidly changing and uncertain environments, leading to incomplete information about the system’s dynamics and payoffs \cite{Zhang2016,cardaliaguet2014pure,tang2021pursuit,bloembergen2015evolutionary}.

To address this uncertainty, we apply Bayesian updating methods, where players update their beliefs about unknown parameters as new information becomes available. Using the Ergodic theorem \cite{edition2002probability}, we show that as players receive independent and identically distributed signals from a stable random process, their estimates of unknown parameters converge to the true values over time, ensuring stability in decision-making.

Existing literature has explored uncertainty in dynamic game models, particularly in ecological and pollution control games, where strategies adapt to unknown ecological parameters through learning processes \cite{masoudi2016dynamic}. Studies such as \cite{mirman2012learning} and \cite{wang2023promotion} address resource management and behavior under uncertainty, typically using random differential game frameworks. However, these models often assume static beliefs about uncertain parameters, limiting their adaptability in dynamic environments \cite{zamir2020bayesian}.

In this paper, we focus on differential games with motion-payoff uncertainty, where both the system dynamics and payoff structures contain unknown parameters. Players are assumed to update their beliefs continuously using continuous Bayesian updating. This approach contrasts with traditional Bayesian updating, which typically occurs in discrete steps \cite{masoudi2016dynamic}. In continuous Bayesian updating, players revise their beliefs in real time as new signals are received \cite{zhou2024enhancing,zhou2024dynamic,zhou2024fishwars}, allowing for more accurate estimations in dynamic environments.

Moreover, while there has been substantial research on uncertainty in system dynamics, less attention has been given to uncertainties in players’ payoff functions. In classical Bayesian game frameworks \cite{zamir2020bayesian}, players’ beliefs about uncertain parameters are typically static. However, in real-world applications, as players receive new information during the game, their beliefs should evolve accordingly. This evolution of beliefs can significantly influence decision-making and strategy formulation.

To address this, we employ a continuous Bayesian updating approach, where players continually revise their uncertainties about unknown parameters based on incoming signals \cite{huang2018analysis}. We introduce a method to define players’ expected payoffs, grounded in these continuously updated beliefs. This approach addresses the limitation in classical Bayesian games where static beliefs may lead to suboptimal decisions if initial assumptions are incorrect.  A key tool in our analysis is the Kalman filter \cite{welch1995introduction,welch2021kalman}, a well-known dynamic Bayesian updating technique. We provide a proof of the convergence of players’ estimates using the Kalman filter, demonstrating its effectiveness in refining strategies in differential games with payoff uncertainty.

In Section \ref{s2}, we present the general model of differential games with motion-payoff uncertainty, including the definition of these games. We also introduce the Nash equilibrium with continuous Bayesian updating for such games. In Section \ref{s3}, we provide the exact form of the belief updating rule using the continuous Bayesian updating approach, followed by a proof that our estimators strongly converge to the true values of the unknown parameters. In Section \ref{s4}, we present simulation results for pollution control games, demonstrating the comparison between the dynamic Bayesian updating method and continuous Bayesian updating. We also derive the Nash equilibrium with continuous Bayesian updating for players in pollution control games with motion-payoff uncertainty and establish the conditions to ensure the optimal control is non-negative. Finally, in Section \ref{s5}, we offer conclusions and discuss potential directions for future research.

\section{Differential Game Model with Motion-Payoff Uncertainty}\label{s2}

This model captures two primary types of uncertainty: motion uncertainty, related to how the system's state evolves over time, and payoff uncertainty, stemming from incomplete information about the types, preferences, and strategies of other players. In this paper, we explore how these uncertainties impact player strategies and outcomes in a dynamic environment involving \(n \geq 2\) players, and analyze the robustness of strategies under these uncertainties over an infinite time horizon.

\subsection{Overview of Motion-Payoff Uncertainty in Differential Games}

The physical model in differential games captures the dynamics of the state with complete information. The evolution of the state \( S(t) \) over time is governed by the following equation:
\begin{equation}\label{continuous_state}
    \dot{S}(t) = f\left( x(t), u_1(t, \cdot), u_2(t, \cdot), \ldots, u_n(t,\cdot), S(t) \right), \quad S(t_0) = S_0,
\end{equation}
where \( S(t) \in \mathbb{R}^n \) represents the state variable at time \( t \), \( x(t)\in   \mathbb{R}\) denotes the realization of the ecological uncertainty \( X(t) \) at time \( t \), and \( u_i(t, \cdot) \) is the emission control action of the \( i \)-th player. 



In traditional game theory models, each player's payoff function is often defined under the assumption of complete information regarding the types and strategies of all other players. The payoff function for player \( i \), assuming known types \( \tau = (\tau_1, \ldots, \tau_n) \), is given by:
\begin{equation*}\label{payoff}
    J_i^\tau(u_1, \ldots, u_n; S_0) = \int_{t_0}^{\infty} g^i\left(t, u_1(t, S; 
    \tau_1,\cdot), \ldots, u_n(t, S; \tau_n,\cdot), S(t)\right) \, dt,
\end{equation*}
where \( g^i(\cdot) \) represents the instantaneous payoff for player \( i \), \( u_j(t, S;\tau_j,\cdot) \) is the control action of player \( j \) at time \( t \), and \( S(t) \) is the state of the system, evolving according to the dynamics in Equation (\ref{continuous_state}).

The existence, uniqueness,  and continuability of solution $S(t)$ for any admissible measurable controls $u_1(\cdot),\ldots, u_n(\cdot)$ was dealt with by Tolwinski, Haurie, and Leitmann~\cite{tolwinski1986cooperative}:
\begin{enumerate}
    \item $f(\cdot):\mathbb{R}\times U \times \mathbb{R}^n\rightarrow{\mathbb{R}^n}$ is continuous.

\item There exists a positive constant $k$ such that
\begin{equation*}
    \forall t \in [t_0, T] \quad and \quad \forall u \in U
\end{equation*}
\begin{equation*}
   \lVert f(x,u,S) \lVert\leq k(1+\lVert S \lVert) 
\end{equation*}

\item $\forall R>0$, $\exists K_R>0$ such that
\begin{equation*}
\forall t \in [t_0, T]\quad and  \quad \forall u \in U
\end{equation*}
\begin{equation*}
   \lVert f(x,u,S)-f(t,x',u)\lVert\leq K_R \lVert S-S' \lVert
\end{equation*}
for all $x$ and $x'$ such that
\begin{equation*}
    \lVert S \lVert \leq R\quad and  \quad   \lVert S' \lVert \leq R
\end{equation*}
\item for any $t \in [t_0,T]$ and $S \in \mathbb{R}^n$ set
   \begin{equation*}
        G(S) = \{ f(x,u,S)\vert u \in U \}
    \end{equation*}
    is a convex compact from $\mathbb{R}^n$.
\end{enumerate}

However, in practical scenarios, players typically do not have complete information about the types \( \tau_j \) of their opponents. Each player's type \( \tau_j \) may include private information such as their preferences, resource constraints, and strategic intentions. The lack of information about these types introduces epistemic uncertainty into the payoff function.

Therefore, each player must estimate the types \( \tau_j \) of their opponents based on the signals and observations available throughout the game. This estimation process affects the player's decision-making as they attempt to optimize their strategies under uncertainty. The true payoff function is thus affected by the player's belief about the unknown types, and this introduces a continuous learning component to the game. 

\subsection{Parametric Uncertainty in Motion Equations in Differential Games}

The procedure is described as follows:

At time \( t \), players observe the actual state \( S(t) \), but the realization of the random variable \( \widetilde{\eta} \) remains unobservable. Instead, players rely on their belief regarding \( \widetilde{\eta} \). Based on the observed state \( S(t) \) and the belief, player \( i \) selects a sequence of decisions for each time $k\in[t, \infty)$ to solve the subgame \( \Gamma(S(t), t, \infty) \). 

Within the defined subgame, it is assumed that the expected motion equation evolves under the belief at time \( t \), reflecting the uncertainty as perceived by the players in their estimation of the random variable \( \widetilde{\eta} \).

The following equation represents the expected system dynamics in the continuous-time subgame \(\Gamma(S(t), t, \infty)\), starting at time \(t\):
\begin{equation}\label{general_motion_equation_continuous}
    \dot{\overline{S}}(k) = f\left( \mathbb{E}[\widetilde{\eta} \vert \overline{\theta}(t)], u_1(k, \overline{S}(k); \tau_1), \ldots, u_n(k, \overline{S}(k); \tau_n), \overline{S}(k) \right), \quad \overline{S}(t) = S(t),
\end{equation}
where $k\in [t,\infty)$, \( \widetilde{\eta} \) represents the random components in the motion equations, and \( \overline{\theta}(t) \) denotes the players' estimates of the unknown parameter \( \theta \) at time \( t \).

\subsection{Parametric Uncertainty in Payoff Functions in Differential Games}

The expected payoff with continuous belief updating for player \(i\in N\), who is privy only to their own type \(\tau_i\), integrates over the probabilistic beliefs concerning the types of other players is formally captured by:
\begin{equation}\label{expected_payoff_continuous1}
\begin{split}
      &\mathbb{E}[J_{i,B}^{\tau_i}(u_1, \ldots, u_n; t, S)] =\\
      &\int_{t}^{\infty} e^{-\rho (k-t)}g^i\left[u_i(k, \overline S(k); \tau_i), \left\{ u_j(k,\overline S(k); \overline \tau_j(t))\right\}_{j \neq i}, \overline S(k)\right] \, dk,
\end{split}
\end{equation}
where $\overline S(t)= S(t)$, and \( \overline \tau_j(t)\) represents the belief parameter of player \(i\) at time \(t\) regarding the type \(\tau_j\) of player \(j\).


For the uncertainty in the payoff functions, we assume it is represented by a constant. The signals received by the players are subject to noise, and based on the historical signals up to time \( t \), players estimate the unknown parameters. The decisions at time \( t \) are based on the subgame \(\Gamma(S(t), t, \infty)\), where the state dynamics and payoff functions are defined in (\ref{general_motion_equation_continuous}) and (\ref{expected_payoff_continuous1}). 

To solve this subgame, we assume that the players' estimates of the unknown parameters for all \( k \in [t, \infty) \) remain the same as their estimates at \( k = t \), because the players only have access to historical signals up to time \( t \). After solving the subgame, setting \( k = t \) yields the players' optimal control at time \( t \). 

It is important to note that the true state trajectory is determined by observing the actual realization of the random variable at the current time.


The following definition formalizes the differential game with motion-payoff uncertainty, using continuously updated beliefs.

In the game described by equations (\ref{general_motion_equation_continuous})--(\ref{expected_payoff_continuous1}), an admissible feedback control with continuous Bayesian updating for player \( i \) takes the following form:
\[
u_i(t, S; \tau_i) = u_i(t, S; \tau_i, \{\overline{\tau}_j(t)\}_{j \in N}, \overline{x}(t)),
\]
where \( u_i\) is the control strategy for player \( i \), which depends on the state \( S \), their own type \( \tau_i \), the public belief \( \{\overline{\tau}_j(t)\}_{j \in N} \) about the types of other players, and the common belief \( \overline{x}(t) \) about the uncertainty in the motion equation.



\begin{definition}\label{def_11}
   In an $n$-player differential game with motion-payoff uncertainty, an $n$-tuple of admissible strategies
\[
\left(u_1^*(k, \overline{S}; \tau_1, \{\overline{\tau}_j(t)\}_{j \in N}, \overline{x}(t)), \dots, u_n^*(k, \overline{S}; \tau_n, \{\overline{\tau}_j(t)\}_{j \in N}, \overline{x}(t))\right)
\]
is said to constitute a generalized feedback Nash equilibrium with continuous Bayesian updating if it satisfies the feedback Nash equilibrium conditions for the subgame $\Gamma(S, t, \infty)$, as defined by (\ref{general_motion_equation_continuous}) and (\ref{expected_payoff_continuous1}).
\end{definition}

In Definition \ref{def_11}, the Nash equilibrium with continuous Bayesian updating is defined for all \( \tau_i \in T_i \) and \( u_i' \in U_i \). The equilibrium strategies depend on beliefs \( \{\overline{\tau}_j(t)\}_{j \in N} \) and the conditional expectation \( \overline{x}(t) \), computed through Bayesian updating. These beliefs and conditional expectations evolve continuously over time as players receive new information.

\begin{definition}\label{assumption_optimal_control_continuous}
The Nash equilibrium strategy with continuous Bayesian updating for player \( i \) at time \( t \) is defined as the optimal control of the subgame when \( k = t \). That is,
\[
u_i^*(t, S; \tau_i, \{\overline{\tau}_j(t)\}_{j \in N}, \overline{x}(t)) = u_i^*(k, \overline S; \tau_i, \{\overline{\tau}_j(t)\}_{j \in N}, \overline{x}(t)) \big\vert_{k=t}.
\]
\end{definition}

This Nash equilibrium extends the classical concept to continuous time, where players' strategies depend on continuous updating beliefs about unknown parameters. Each player's strategy \( u_i^* \) maximizes their expected payoff, incorporating the evolving beliefs and the estimation of other players' type, ensuring optimality given the game's current state and beliefs.


\section{Differential Game Model with Motion-Payoff Uncertainty Using Continuous Bayesian Updating}\label{s3}
After quantifying uncertainties, we explore how players utilize continuous Bayesian updating to refine their beliefs about ecological parameters based on received signals. Starting with prior beliefs, players update them to posterior beliefs as new data is acquired. Additionally, we examine the impact of informational asymmetry on payoff functions, where player \(i\) continuously updates their understanding of other players' cost parameters \(\tau_j\) based on the signals \(y_j(t)\) they receive.

\subsection{Continuous Bayesian Updating}




Formally, consider the prior belief \(\xi_t(\theta)\) at time \(t\) and a received signal \(x(t)\). The posterior belief \({\xi_t}(\theta \vert x(t))\) is updated according to Bayesian inference, and is given by:
\begin{equation}\label{Bayesian_theorem_con}
   {\xi_t}(\theta \vert x(t)) 
\propto {\phi(x(t) \vert \theta) \xi_t(\theta)},
\end{equation}
where \(\theta \in \Theta\). This formula (\ref{Bayesian_theorem_con}) delineates the updating mechanism of beliefs using the data obtained from the signal \(x(t)\). 

In stages \( t = 0, \Delta t, 2\Delta t, \ldots, \infty\), players update their joint prior distribution \( \xi_t(\theta) \) for the parameters \( \mu \) and \( \lambda \) to reflect the modified time interval \( \Delta t \). This update is based on the following conjugate prior distributions \cite{murphy2007conjugate}:
\begin{equation*}
    \xi_t(\theta) \stackrel{\text{def}}{=} \mathcal{N}\left(\mu \mid \mu({t}), \left(\kappa({t}) \lambda\right)^{-1}\right) \times \text{Ga}\left(\lambda \mid \alpha({t}), \text{rate}=\beta({t})\right),
\end{equation*}
where \( \mu(t) \), \( \kappa(t) \), \( \alpha(t) \), and \( \beta(t) \) represent the updated prior estimates or belief parameters about the mean, the precision of the normal distribution for \( \mu \), the shape, and the rate parameters of the gamma distribution for \( \lambda \), respectively. 

From the paper \cite{zhou2024enhancing}, we can directly derive the iteration rules for belief updating in the discrete-time case. In the following proposition, we transition to continuous time by considering \(\Delta t \to 0\):

\begin{proposition}\label{continuous belief motion}
In contexts of uncertainty in motion equation, players employ continuous Bayesian updating to revise their beliefs based on new data \(x(t)\) as follows:
\begin{equation}\label{belief_equation}
\begin{split}
    \dot{\mu}(t) &= \frac{x(t)-\mu(t)}{\kappa(t)+1}, \quad \mu(0) = \mu_0,\\ 
     \dot{{\kappa}}(t) &= 1, \quad {\kappa}(0) = \kappa_0,\\ 
     \dot{{\alpha}}(t) &= \frac{1}{2}, \quad {\alpha}(0) = \alpha_0, \\ 
     \dot{{\beta}}(t) &= \frac{{\kappa}(t)(x(t) - {\mu}(t))^2}{2({\kappa}(t) + 1)}, \quad {\beta}(0) = \beta_0.
\end{split}
\end{equation}    
\end{proposition}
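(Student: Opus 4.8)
The plan is to derive the ODE system \eqref{belief_equation} as the $\Delta t\to0$ limit of the discrete Normal--Gamma conjugate updates inherited from \cite{zhou2024enhancing}. First I would record the one-step belief recursion for a game with stage length $\Delta t$: writing the belief parameters at the stage indexed by $t\in\{0,\Delta t,2\Delta t,\dots\}$ as $\mu(t),\kappa(t),\alpha(t),\beta(t)$ and the signal accrued over $[t,t+\Delta t]$ as $x(t)$, conjugacy yields increments of the form
\[
\kappa(t+\Delta t)-\kappa(t)=\Delta t,\qquad \alpha(t+\Delta t)-\alpha(t)=\tfrac{\Delta t}{2},
\]
\[
\mu(t+\Delta t)-\mu(t)=\frac{\Delta t\,(x(t)-\mu(t))}{\kappa(t)+1},\qquad \beta(t+\Delta t)-\beta(t)=\frac{\Delta t\,\kappa(t)\,(x(t)-\mu(t))^2}{2(\kappa(t)+1)},
\]
which are exactly the discrete iteration rules of \cite{zhou2024enhancing}, with the information accrued over one stage proportional to the interval length $\Delta t$.

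Next I would divide each increment by $\Delta t$ to form the forward difference quotients $(\kappa(t+\Delta t)-\kappa(t))/\Delta t$, $(\mu(t+\Delta t)-\mu(t))/\Delta t$, and so on, and let $\Delta t\to0$: the left-hand sides tend to $\dot\kappa(t),\dot\mu(t),\dot\alpha(t),\dot\beta(t)$ and the right-hand sides to the expressions appearing in \eqref{belief_equation}, while the initial data $\mu(0)=\mu_0$, $\kappa(0)=\kappa_0$, $\alpha(0)=\alpha_0$, $\beta(0)=\beta_0$ pass to the limit unchanged. A structural remark makes the limiting system transparent and streamlines the rigorous step: the $\kappa$- and $\alpha$-equations decouple and integrate explicitly to $\kappa(t)=\kappa_0+t$ and $\alpha(t)=\alpha_0+t/2$, so that $\kappa(t)+1\ge\kappa_0+1>0$ for all $t$; the $\mu$-equation is then linear in $\mu$ with continuous, bounded coefficients, hence has a unique solution given by an integrating-factor formula; and $\beta$ is recovered by one quadrature once $\mu$ and $\kappa$ are known. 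Thus the limit system \eqref{belief_equation} has a unique global solution, and what remains is to show that the discrete belief trajectory indeed converges to it.

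I expect this last point to be the main obstacle: the passage to the limit must be justified, not merely formal, i.e.\ one must show that the piecewise interpolant of the discrete iterates converges uniformly on compact time intervals to the solution of \eqref{belief_equation}. I would argue this along the classical consistency-plus-stability route for one-step (Euler-type) schemes: (i) \emph{consistency}, namely that each one-step increment equals $\Delta t$ times the corresponding right-hand side of \eqref{belief_equation} up to $o(\Delta t)$, which follows from the algebra above together with continuity of $t\mapsto x(t)$; and (ii) \emph{stability}, namely that the right-hand sides of \eqref{belief_equation} are locally Lipschitz on the region $\{\,\kappa+1\ge\kappa_0+1\,\}$ uniformly over the relevant range of $x$, so that a Gronwall inequality controls the accumulation of the local errors. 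Two points require care. First, $x(\cdot)$ is a realization of a random process, so the estimate is carried out pathwise and needs only mild regularity of the path — continuity suffices and is the natural hypothesis here. Second, the quadratic term $(x(t)-\mu(t))^2$ in the $\beta$-equation makes its Lipschitz constant depend on a priori bounds for $\mu$ and $x$ on the interval; those bounds come first from the already-solved $\mu$-equation, after which the estimate for $\beta$ is routine. Alternatively, since \eqref{belief_equation} is explicitly solvable, one may bypass Gronwall and compare the discrete iterates directly against the closed-form solution.
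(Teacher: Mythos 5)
Your derivation follows essentially the same route as the paper's own proof: take the discrete Normal--Gamma conjugate updates with per-stage increments scaled by $\Delta t$, form the forward difference quotients, and pass to the limit $\Delta t\to 0$ to obtain the system \eqref{belief_equation} with the stated initial conditions. The consistency-plus-stability (Euler/Gronwall) justification and the explicit solvability remarks you add go beyond the paper, which stops at the formal limit of the difference quotients, but they do not change the underlying argument.
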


\begin{proof}
Firstly, consider the derivative of \({\kappa}(t)\), $t\geq 0$:
\begin{equation*}
    \frac{{\kappa}(t+\Delta t) - {\kappa}(t)}{\Delta t} = 1,
\end{equation*}
and let \(\Delta t \to 0\), we obtain
\begin{equation*}
    \dot{{\kappa}}(t) = 1,
\end{equation*}
together with the initial condition, we derive
\begin{equation*}\label{k_t}
    {\kappa}(t) = t + \kappa_0.
\end{equation*}

Applying the same procedure to \({\alpha}(t)\), $t\geq 0$:
\begin{equation*}
    \lim_{\Delta t \to 0}\frac{{\alpha}(t+\Delta t) - {\alpha}(t)}{\Delta t} = \frac{1}{2},
\end{equation*}
by the definition of derivative, we find
\begin{equation}\label{alphadot}
    \dot{{\alpha}}(t) = \frac{1}{2}, \quad \alpha(0) = \alpha_0.
\end{equation}

The solution for (\ref{alphadot}) is
\begin{equation*}
    {\alpha}(t) = \frac{1}{2}t + \alpha_0.
\end{equation*}

For \({\mu}(t)\):
\begin{equation*}
   \lim_{\Delta t \to 0} \frac{{\mu}(t+ \Delta t) - {\mu}(t)}{\Delta t} = \lim_{\Delta t \to 0} \frac{x(t)-\mu(t)}{\kappa(t)+1},
\end{equation*}
we deduce that
\begin{equation*}
    \dot{\mu}(t) = \frac{x(t)-\mu(t)}{\kappa(t)+1}, \quad \mu(0) = \mu_0.
\end{equation*}

Lastly, for the differential equation of \({\beta}(t)\):
\begin{equation*}
    \lim_{\Delta t \to 0}\frac{{\beta}(t+ \Delta t) - {\beta}(t)}{\Delta t} = \lim_{\Delta t \to 0}\frac{{\kappa}(t)(x(t ) - {\mu}(t))^2}{2({\kappa}(t) + 1)},
\end{equation*}
thus, we conclude
\begin{equation*}
    \dot{{\beta}}(t) = \frac{{\kappa}(t)(x(t) - {\mu}(t))^2}{2({\kappa}(t) + 1)}.
\end{equation*}
\end{proof}

In the strategic game setting, the type of player \(i\), denoted by \(\tau_i\), is considered a constant and is common knowledge among all other players $j\in N\backslash i$. However, the exact value of \(\tau_i\) is unknown to the others. Consequently, we model the dynamics and observation of \(\tau_i\) as follows:
\begin{equation}\label{kalman}
\begin{split}
     \dot{\tau}_i(t) &= 0, \\
     y_i(t) &= \tau_i + v_i(t),
\end{split}
\end{equation}
where \(y_i(t)\) represents the signal observed by the remaining players \(j \neq i\) about the unknown parameter \(\tau_i\). The noise term \(V_i(t)\) is assumed to follow a normal distribution, specifically \(V_i(t) \sim N(0, R_i)\), where \(R_i>0\) is known to all players. 

\begin{proposition}\label{proposition_kalman_estimation}
The estimation of the unknown parameter \(\tau_i\) for player \(i\) in a continuous time setting, based on the observed signals \(y_i(t)\), can be obtained using the classical Kalman filter approach. The estimation process is governed by the following system of differential equations:
\begin{equation}\label{esti_payoff}
\begin{split}
    \dot{\overline{\tau}_i}(t) &= \frac{P_i(t)}{R_i} \left( y_i(t) - \overline{\tau}_i(t) \right), \quad \overline{\tau}_i(0) = \tau_i^0,\\
    \dot{P}_i(t) &= -\frac{(P_i(t))^2}{R_i}, \quad P_i(0) = P_i^0,
\end{split}
\end{equation}
where \(\overline{\tau}_i(t)\) represents the estimated value of the unknown parameter \(\tau_i\) at time \(t\), and \(P_i(t)\) is the estimation error covariance at time \(t\). 
\end{proposition}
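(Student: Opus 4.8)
The plan is to obtain the filter equations~(\ref{esti_payoff}) as the continuous-time limit of the classical discrete Kalman filter applied to the system~(\ref{kalman}), in exact parallel with the passage from discrete to continuous updating used in Proposition~\ref{continuous belief motion}. First I would discretize~(\ref{kalman}) on the grid $t = 0, \Delta t, 2\Delta t, \ldots$: since $\dot{\tau}_i \equiv 0$, the state-transition matrix is the identity and the process noise is zero, so the prediction step is trivial and the predicted estimate and error covariance coincide with their values $\overline{\tau}_i(t)$ and $P_i(t)$ from the previous stage. The one subtle modelling point is the measurement noise: the continuous-time observation noise $v_i$ with ``instantaneous variance'' $R_i$ corresponds, after sampling over an interval of length $\Delta t$, to a discrete measurement noise of variance $R_i/\Delta t$. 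Getting this scaling right is what makes the limit nontrivial, and stating it cleanly is the step I expect to require the most care.

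With observation matrix $H=1$, transition matrix $F=1$, zero process noise, and discrete measurement-noise variance $R_i/\Delta t$, the standard Kalman update gives the gain
\[
K(t+\Delta t) = \frac{P_i(t)}{P_i(t) + R_i/\Delta t} = \frac{P_i(t)\,\Delta t}{P_i(t)\,\Delta t + R_i},
\]
the mean update $\overline{\tau}_i(t+\Delta t) = \overline{\tau}_i(t) + K(t+\Delta t)\bigl(y_i(t+\Delta t) - \overline{\tau}_i(t)\bigr)$, and the covariance update $P_i(t+\Delta t) = \bigl(1 - K(t+\Delta t)\bigr)P_i(t)$. Substituting the expression for $K$ into the last two relations and dividing by $\Delta t$ yields
\[
\frac{\overline{\tau}_i(t+\Delta t) - \overline{\tau}_i(t)}{\Delta t} = \frac{P_i(t)}{P_i(t)\,\Delta t + R_i}\bigl(y_i(t+\Delta t) - \overline{\tau}_i(t)\bigr), \qquad \frac{P_i(t+\Delta t) - P_i(t)}{\Delta t} = -\,\frac{P_i(t)^2}{P_i(t)\,\Delta t + R_i}.
\]

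Letting $\Delta t \to 0$ in both difference quotients, exactly as in the proof of Proposition~\ref{continuous belief motion}, the terms $P_i(t)\,\Delta t$ in the denominators vanish and we recover $\dot{\overline{\tau}_i}(t) = \tfrac{P_i(t)}{R_i}\bigl(y_i(t) - \overline{\tau}_i(t)\bigr)$ and $\dot{P}_i(t) = -\tfrac{P_i(t)^2}{R_i}$, while the prior values $\overline{\tau}_i(0) = \tau_i^0$ and $P_i(0) = P_i^0$ pass to the initial conditions of~(\ref{esti_payoff}). Beyond the noise-scaling bookkeeping, the remaining obstacle is the routine one of justifying that the discrete recursion converges to the solution of the limiting ODE system; this follows from standard arguments since the right-hand sides are locally Lipschitz in $(\overline{\tau}_i, P_i)$ and the Riccati solution $P_i(t) = R_i P_i^0 / (R_i + P_i^0 t)$ stays positive and bounded for all $t \ge 0$, so I would only remark on it rather than carry it out. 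Alternatively, one may bypass the limit entirely by invoking the Kalman--Bucy filter for the linear--Gaussian system~(\ref{kalman}) with $A=0$, $C=1$, zero process noise and measurement spectral density $R_i$, and substituting these values directly into its innovation and Riccati equations; I would mention this as a shortcut but keep the limiting derivation as the primary argument to stay consistent with the rest of the section.
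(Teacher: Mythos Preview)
Your proposal is correct, but it does substantially more work than the paper itself. The paper's proof is a single sentence: it simply asserts that the result follows directly from the classical continuous Kalman filter formulation and refers to the standard Kalman--Bucy equations. In other words, the paper's argument is precisely the ``shortcut'' you mention at the end---substitute $A=0$, $C=1$, zero process noise, and measurement variance $R_i$ into the standard Kalman--Bucy innovation and Riccati equations---and nothing more. Your primary route, taking the $\Delta t \to 0$ limit of the discrete Kalman recursion with the $R_i/\Delta t$ scaling, is a genuinely different and more self-contained derivation; it has the advantage of keeping the argument parallel to Proposition~\ref{continuous belief motion} and of making the origin of the equations transparent rather than relying on an external citation, at the cost of being longer than what the paper actually supplies.
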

\begin{proof}
    The result follows directly from the classical continuous Kalman filter formulation and can be derived in a straightforward manner using standard continuous Kalman filter equations.
\end{proof}

The Kalman gain \(\frac{P_i(t)}{R_i}\) dynamically modulates this adjustment, ensuring that the estimation becomes more accurate as new information is received.

\subsection{Strong Belief Convergence Theorems under Motion Equation
Uncertainty}


In this chapter, \(M(t)\) represents the estimator for the mean \(\mu(t)\) of the random variable \(\widetilde{\eta}\). As \(t \to \infty\), \(M(t)\) converges to the true mean \(\mu\), indicating that as more data is collected over time, the estimate becomes increasingly accurate.

\begin{proposition}\label{ergodic}
   Let \( \{X(t): t \in \mathcal{I}\} \) be a random process where each \( X(t) \) is an independent and identically distributed (i.i.d.) random variable, and each \( X(t) \) follows a normal distribution \( N(\mu, \sigma^2) \). Then the process \( \{X(t)\} \) is both strictly stationary and ergodic, where \( \mathcal{I} \) represents the index set over which the process is defined, such as continuous time \( \mathcal{I} = \mathbb{R} \) or discrete time \( \mathcal{I} = \mathbb{Z} \).
\end{proposition}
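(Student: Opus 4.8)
The plan is to verify the two assertions separately on the canonical path space. Realise $\{X(t):t\in\mathcal{I}\}$ on $\Omega=\mathbb{R}^{\mathcal{I}}$ equipped with the product $\sigma$-algebra and the product law $\mathbb{P}=\bigotimes_{t\in\mathcal{I}}N(\mu,\sigma^2)$, with $X(t)$ the coordinate maps and $\theta_h\colon\Omega\to\Omega$, $(\theta_h\omega)(t)=\omega(t+h)$, the time shift by an admissible $h\in\mathcal{I}$; this space has the same finite-dimensional distributions as the given process, so both notions may be checked here.

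For \textbf{strict stationarity}, I would show that for every finite tuple $t_1,\dots,t_k\in\mathcal{I}$ and every admissible $h$ the law of $(X(t_1+h),\dots,X(t_k+h))$ equals that of $(X(t_1),\dots,X(t_k))$. After discarding repeated indices (which only add diagonal components, carried over identically by $\theta_h$), independence forces the joint law of the distinct coordinates to be the $k$-fold product $N(\mu,\sigma^2)^{\otimes k}$, which does not depend on which indices were chosen, so invariance under $\theta_h$ is immediate. Equivalently, $\theta_h$ merely reindexes the coordinate factors and hence maps the product measure $\mathbb{P}$ to itself, so $(\theta_h)$ is a measure-preserving family and the finite-dimensional distributions are shift-invariant.

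For \textbf{ergodicity}, the cleanest route is to establish the stronger mixing property and deduce ergodicity from it. For cylinder events $A\in\sigma\!\left(X(t_1),\dots,X(t_p)\right)$ and $B\in\sigma\!\left(X(s_1),\dots,X(s_q)\right)$, as soon as $h$ lies outside the finite set $\{\,s_j-t_i\,\}$ the families $\{X(t_i+h)\}_i$ and $\{X(s_j)\}_j$ involve disjoint coordinates, so independence gives $\mathbb{P}(\theta_h^{-1}A\cap B)=\mathbb{P}(A)\,\mathbb{P}(B)$ exactly; in particular $\mathbb{P}(\theta_h^{-1}A\cap B)\to\mathbb{P}(A)\mathbb{P}(B)$ as $h\to\infty$ for cylinder sets. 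I would then upgrade this to arbitrary $A,B$ in the product $\sigma$-algebra by a monotone-class argument: every such event is approximated in $\mathbb{P}$-measure by cylinders, and $\bigl|\mathbb{P}(\theta_h^{-1}A\cap B)-\mathbb{P}(A)\mathbb{P}(B)\bigr|$ is bounded by these approximation errors uniformly in $h$. Finally, mixing implies ergodicity: if $A$ is shift-invariant, i.e.\ $\theta_h^{-1}A=A$ for all $h$, then $\mathbb{P}(A)=\mathbb{P}(\theta_h^{-1}A\cap A)\to\mathbb{P}(A)^2$, forcing $\mathbb{P}(A)\in\{0,1\}$. (In the discrete case one may instead note that every shift-invariant event lies in the tail $\sigma$-algebra of the i.i.d.\ family and invoke Kolmogorov's zero--one law; in the continuous case one may argue directly that a product-measurable invariant event depends on only countably many coordinates $C$, hence is independent of its own $\theta_h$-image for any $h\notin C-C$, again giving $\mathbb{P}(A)=\mathbb{P}(A)^2$.)

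I expect the principal obstacle to be conceptual rather than computational, namely fixing the right framework in continuous time: a ``continuum of i.i.d.\ variables'' is not jointly measurable in $(t,\omega)$ and admits no separable modification, so ``strictly stationary'' and ``ergodic'' must be read as properties of the shift flow $(\theta_h)_{h\in\mathcal{I}}$ on the canonical product probability space, and the ergodic (Birkhoff) theorem is then applied in that sense, or in practice to the discrete skeleton $\{X(k):k\in\mathbb{Z}\}$ to which the estimator $M(t)$ is tied. The only remaining step requiring a little care is the cylinder-to-general approximation in the mixing argument, which is a routine use of the monotone class theorem once the exact factorisation on cylinders is in hand.
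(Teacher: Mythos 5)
Your proposal is correct, but it takes a genuinely different (and more rigorous) route than the paper. The paper treats ergodicity \emph{by definition} as the almost-sure convergence of time averages to the mean and simply invokes the strong law of large numbers for the i.i.d.\ family, writing $\lim_{V\to\infty}\frac{1}{V}\int_0^V X(t)\,dt=\mu$ a.s.; its stationarity argument is the same finite-dimensional-distribution observation you make, just stated without the canonical-space formalism. You instead prove ergodicity in the dynamical-systems sense: you realize the process on the product space, show the shift family is measure-preserving, establish exact factorization $\mathbb{P}(\theta_h^{-1}A\cap B)=\mathbb{P}(A)\mathbb{P}(B)$ on cylinder events for large $h$, upgrade to the full product $\sigma$-algebra by approximation, and conclude that invariant sets are trivial (or, in discrete time, via Kolmogorov's zero--one law), with Birkhoff's theorem then delivering the time-average statement. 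What your approach buys is precisely the rigor the paper lacks: you correctly flag that an uncountable family of i.i.d.\ variables admits no jointly measurable modification, so the paper's pathwise integral $\int_0^V X(t)\,dt$ is not well defined as written, and the statement must be read either through the shift flow on the canonical space or through a discrete skeleton tied to the estimator $M(t)$. What the paper's route buys is brevity and directness toward the property actually used later (convergence of time averages in Theorems on belief convergence), at the cost of glossing over this measurability issue. Both establish the claimed conclusion, but yours is the sound version of the argument.
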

\begin{proof}
  Since each \( X(t) \) is i.i.d. and follows the same distribution \( N(\mu, \sigma^2) \), the distribution of any finite collection \( \{X(t_1), X(t_2), \dots, X(t_k)\} \) is always the same, regardless of the specific times \( t_1, t_2, \dots, t_k \). Thus, \( \{X(t)\} \) is strictly stationary \cite{park2018fundamentals} because:
   \[
   (X(t_1), X(t_2), \dots, X(t_k)) \overset{d}{=} (X(t_1+h), X(t_2+h), \dots, X(t_k+h)) \quad \text{for all } h,
   \]
   where \( \overset{d}{=} \) denotes equality in distribution.

   Ergodicity in this context means that the time average of a single realization converges almost surely to the expected value of the process \cite{edition2002probability}. For each \( t \), since \( X(t) \sim N(\mu, \sigma^2) \) and each \( X(t) \) is independent of others, by the Strong Law of Large Numbers (SLLN) for i.i.d. random variables:
   \[
   \lim_{V \to \infty} \frac{1}{V} \int_0^V X(t) = \mu \quad \text{almost surely}.
   \]

Therefore, the random process \( \{X(t)\} \) is both strictly stationary and ergodic.
\end{proof}

 \begin{theorem}\label{thm estimation of unknown mean1}
 (Strong convergence)
     The estimator of the unknown mean will tend to the real value, which means 
     \begin{equation*}
        \lim _{t \rightarrow \infty} M(t)=\mu, 
     \end{equation*}
    where $M(t)$ is the random variable describing the related belief $\mu(t)$ at stage $t>0$; it is the belief of the unknown mean $\mu$ at stage $t$.
 \end{theorem}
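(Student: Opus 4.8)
The plan is to reduce the statement to the explicit solution of the belief ODE from Proposition~\ref{continuous belief motion} together with the ergodic/strong-law result of Proposition~\ref{ergodic}. First I would insert the closed form $\kappa(t) = t + \kappa_0$ (already obtained inside the proof of Proposition~\ref{continuous belief motion}) into the first line of (\ref{belief_equation}), giving the scalar linear ODE
\[
\dot{\mu}(t) = \frac{x(t) - \mu(t)}{t + \kappa_0 + 1}, \qquad \mu(0) = \mu_0 .
\]
Multiplying through by the integrating factor $t + \kappa_0 + 1$ turns this into $\frac{d}{dt}\big[(t + \kappa_0 + 1)\mu(t)\big] = x(t)$, and integrating from $0$ to $t$ yields the representation
\[
\mu(t) = \frac{(\kappa_0 + 1)\,\mu_0 + \int_0^t x(s)\,ds}{\,t + \kappa_0 + 1\,}.
\]

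Next I would pass to the random description. Writing $X(s)$ for the random signal whose realization is $x(s)$, the corresponding belief random variable is
\[
M(t) = \frac{(\kappa_0 + 1)\,\mu_0}{t + \kappa_0 + 1} + \frac{t}{t + \kappa_0 + 1}\cdot\frac{1}{t}\int_0^t X(s)\,ds .
\]
As $t \to \infty$ the first summand is deterministic and tends to $0$, the prefactor $t/(t+\kappa_0+1)$ tends to $1$, and by Proposition~\ref{ergodic} (strict stationarity and ergodicity of $\{X(t)\}$, i.e.\ the continuous-time Strong Law of Large Numbers) one has $\frac{1}{t}\int_0^t X(s)\,ds \to \mu$ almost surely. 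Combining these three facts gives $\lim_{t\to\infty} M(t) = \mu$ almost surely, which is the asserted strong convergence; as a by-product the representation also shows the prior is forgotten at rate $O(1/t)$, so $\mu_0$ and $\kappa_0$ play no role in the limit (and, since $\kappa_0 \ge 0$, the denominator never vanishes).

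The step I expect to be the main obstacle is not the ODE manipulation, which is routine, but giving rigorous meaning to the continuum object $\int_0^t X(s)\,ds$ for a family of i.i.d.\ variables and to the continuous-time SLLN invoked through Proposition~\ref{ergodic}. A clean way to handle this is to mirror the discrete derivation used for (\ref{belief_equation}): apply the ordinary i.i.d.\ SLLN to the Riemann sums $\frac{1}{N}\sum_{j=1}^{N} X(j\Delta t)$ over a mesh $\Delta t \to 0$, and then pass to the integral while controlling the discretization error uniformly in $t$. I would present the argument at the same level of rigor as Propositions~\ref{ergodic} and~\ref{proposition_kalman_estimation}, citing the SLLN/ergodic theorem and treating the continuous-time average as the limit of these sums.
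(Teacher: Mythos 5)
Your proposal is correct and follows essentially the same route as the paper: solve the linear ODE explicitly (integrating factor vs.\ the paper's variation of parameters), write $M(t)$ as a vanishing prior term plus the time average $\frac{1}{t}\int_0^t X(s)\,ds$, and invoke the ergodicity/SLLN of Proposition~\ref{ergodic} to get almost sure convergence to $\mu$. Your integration constant $(\kappa_0+1)\mu_0$ is in fact the one consistent with $M(0)=\mu_0$ (the paper writes $\mu_0 k_0$), but this discrepancy is immaterial for the limit, and your closing remark on making the continuous-time average rigorous is a reasonable refinement of the same argument rather than a different proof.
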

\begin{proof}
From  (\ref{belief_equation}), we start by analyzing the differential equation:
\begin{equation}\label{eq1}
\dot{M}(t) = \frac{X(t) - M(t)}{k_0 + t+1}.
\end{equation}

We first examine the homogeneous part of \eqref{eq1}:
\begin{equation*}
\dot{M}(t) = -\frac{M(t)}{k_0 + t+1}.
\end{equation*}
Solving this, we find:
\begin{equation*}
M(t) = \frac{c}{k_0 + t+1}.
\end{equation*}

Next, let \(M(t) = \frac{v(t)}{k_0 + t+1}\) and substitute back into \eqref{eq1} to obtain:
\begin{equation*}
v(t) = \int_0^t X(\tau) \, d\tau + c.
\end{equation*}

With the initial condition \(M(0) = \mu_0\), it follows that \(c = \mu_0 k_0\). Therefore:
\begin{equation*}
M(t) = \frac{1}{k_0 + t+1} \left(\int_0^t X(\tau) \, d\tau + \mu_0 k_0\right).
\end{equation*}

As \(t \to \infty\), we consider:
\begin{equation*}
    \lim_{t \to \infty} M(t) = \lim_{t \to \infty} \frac{t}{t + k_0+1} \left(\frac{1}{t}\int_0^t X(\tau) \, d\tau\right) + \frac{\mu_0 k_0}{t + k_0+1} = \mu.
\end{equation*}


This proof demonstrates that the estimator \(M(t)\) converges to the true unknown mean of the distribution over time.
\end{proof}

\begin{proposition}\label{finite}
    Let \(X(t) \sim N(\mu(t), \sigma^2(t))\) be a sequence of independent normally distributed random variables for \(t \geq 0\), where \(\int_{0}^\infty \mu(t) \, dt < \infty\) and \(0\neq \int_{0}^\infty \sigma^2(t) \, dt < \infty\). Then the random integral \(\int_{0}^\infty X(t) \, dt\) converges almost surely to \(\int_{0}^\infty \mu(t) \, dt\).
\end{proposition}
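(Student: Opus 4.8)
The plan is to separate the deterministic drift from the random fluctuation and to show that the fluctuation, once integrated, has vanishing variance. First I would center the process: write $X(t) = \mu(t) + \xi(t)$ with $\xi(t)\sim N(0,\sigma^{2}(t))$ independent and mean zero. Since $\int_{0}^{\infty}\mu(t)\,dt$ is, by hypothesis, a finite deterministic number, the claim reduces to proving that $\int_{0}^{\infty}\xi(t)\,dt = 0$ almost surely (which in particular forces the improper integral to converge). This is the point at which the present situation genuinely differs from Theorem \ref{thm estimation of unknown mean1}: there convergence came from the Strong Law of Large Numbers applied to a stationary i.i.d.\ family, whereas here $\int_{0}^{\infty}\sigma^{2}(t)\,dt<\infty$ lets us argue directly at the level of second moments.

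Next I would fix a finite horizon $T$ and compute the first two moments of $Y_{T}:=\int_{0}^{T}\xi(t)\,dt$. For the mean, Fubini's theorem gives $\mathbb{E}[Y_{T}]=\int_{0}^{T}\mathbb{E}[\xi(t)]\,dt=0$, the interchange being legitimate since $\int_{0}^{T}\mathbb{E}|\xi(t)|\,dt\le\int_{0}^{T}\sigma(t)\,dt\le\sqrt{T}\,\bigl(\int_{0}^{T}\sigma^{2}(t)\,dt\bigr)^{1/2}<\infty$ by Cauchy--Schwarz. For the variance, expanding the square and applying Fubini once more (justified analogously, using $\mathbb{E}|\xi(s)\xi(t)|\le\sigma(s)\sigma(t)$ and $(\int_{0}^{T}\sigma)^{2}\le T\int_{0}^{T}\sigma^{2}<\infty$),
\[
\mathrm{Var}(Y_{T})=\mathbb{E}\!\left[\Bigl(\int_{0}^{T}\xi(t)\,dt\Bigr)^{2}\right]=\int_{0}^{T}\!\!\int_{0}^{T}\mathbb{E}[\xi(s)\xi(t)]\,ds\,dt .
\]
By independence and mean zero, $\mathbb{E}[\xi(s)\xi(t)]=0$ for every $s\neq t$, so the integrand vanishes off the diagonal $\{s=t\}$, which has two-dimensional Lebesgue measure zero; hence $\mathrm{Var}(Y_{T})=0$. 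A random variable with mean $0$ and zero variance equals $0$ almost surely, so $Y_{T}=0$ a.s.\ for each fixed $T$.

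Finally I would upgrade ``for each $T$'' to ``for all $T$ simultaneously'': take the union of the exceptional null sets over rational $T$, and use that $T\mapsto\int_{0}^{T}\xi(t,\omega)\,dt$ is (absolutely) continuous for almost every $\omega$ because the paths are locally integrable. This gives, almost surely, $Y_{T}=0$ for all $T\ge 0$; letting $T\to\infty$ yields $\int_{0}^{\infty}\xi(t)\,dt=0$ a.s.\ and shows the improper integral converges. Adding back the constant $\int_{0}^{\infty}\mu(t)\,dt$ finishes the proof.

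The main obstacle is not any single estimate but the measure-theoretic set-up underlying all of them. For a genuinely independent continuum $\{X(t)\}_{t\ge 0}$ the map $(t,\omega)\mapsto X(t,\omega)$ need not be jointly measurable, so $\int_{0}^{T}X(t)\,dt$ is not \emph{a priori} a well-defined random variable and the two uses of Fubini need justification. I would dispose of this exactly as the paper already does implicitly when it writes integrals such as $\int_{0}^{t}X(\tau)\,d\tau$ in the proof of Theorem \ref{thm estimation of unknown mean1}, namely by working with a jointly measurable version of the process having locally integrable sample paths; granted that, the argument above goes through verbatim. It is also worth noting that it is precisely the diagonal's having measure zero that makes the accumulated noise average out, which is why the limit is the deterministic value $\int_{0}^{\infty}\mu(t)\,dt$ rather than a nondegenerate Gaussian.
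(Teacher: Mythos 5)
Your proposal is correct (modulo the measurability caveat you yourself flag) and it takes a genuinely different route from the paper. The paper works directly with $Z=\int_0^\infty X(t)\,dt$, asserts $\mathbb{D}[Z]=\int_0^\infty \sigma^2(t)\,dt$ by passing the variance through the integral as if $Z$ were a discrete sum of independent terms, and then invokes Chebyshev with $\epsilon=1/n$ together with Borel--Cantelli; note that this final step does not actually close, since the bound it produces is $P(A_n)\le n^2\,\mathbb{D}[Z]$, which is summable only if $\mathbb{D}[Z]=0$. You instead center the process, compute the variance of $\int_0^T \xi(t)\,dt$ as the double integral $\int_0^T\!\int_0^T \mathbb{E}[\xi(s)\xi(t)]\,ds\,dt$ --- the correct formula for a time integral --- and observe that independence annihilates the integrand off the Lebesgue-null diagonal, so the variance is exactly zero and $\int_0^T\xi(t)\,dt=0$ a.s.\ for each $T$, with the extension to all $T$ and $T\to\infty$ handled by continuity of the paths integral; no Borel--Cantelli is needed. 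In other words, your computation supplies precisely the zero-variance fact that the paper's Chebyshev step would require, and it yields the stronger statement that the random integral is a.s.\ equal to the deterministic value. The trade-off is interpretive: the paper's variance formula matches the discrete-sum (or white-noise) intuition, but under that reading $Z$ would be a nondegenerate Gaussian and could not equal its mean almost surely, whereas your pointwise-independence reading is the one under which the proposition can hold --- at the price of the joint-measurability assumption, which (as you note) is implicit in the paper as well and is genuinely delicate when $\sigma^2(t)>0$ on a set of positive measure.
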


\begin{proof}

We begin by defining a new random variable \(Z = \int_{0}^\infty X(t) \, dt\) and analyze its expectation and variance.

\begin{equation}\label{expec}
    \mathbb{E}[Z] = \mathbb{E}\left[\int_{0}^\infty X(t) \, dt\right] = \int_{0}^\infty \mathbb{E}[X(t)] \, dt = \int_{0}^\infty \mu(t) \, dt<\infty.
\end{equation}

\begin{equation}\label{var}
\mathbb{D}[Z] = \mathbb{D}\left[\int_{0}^\infty X(t) \, dt\right] = \int_{0}^\infty \mathbb{D}[X(t)] \, dt = \int_{0}^\infty \sigma^2(t) \, dt<\infty.  
\end{equation}
The equality between the second and third expressions of (\ref{expec}) and (\ref{var}) in the equations rely on Fubini's Theorem \cite{Fubini1907}.


Applying Chebyshev's Inequality \cite{Feller1968}, for any \(\epsilon > 0\), we have:
\[
{P}\left(\left\vert Z - \mathbb{E}[Z]\right\vert \geq \epsilon\right) \leq \frac{\mathbb{D}[Z]}{\epsilon^2}.
\]
Since $\mathbb{D}[Z]<\infty$, this probability becomes arbitrarily small as \(\epsilon\) grows larger.

Define events \(A_n\) for \(n = 1, 2, 3, \ldots\) as follows:
\[
A_n = \left\{\left\vert Z - \mathbb{E}[Z]\right\vert \geq \frac{1}{n}\right\}.
\]
Using Chebyshev's Inequality, we obtain:
\[
{P}(A_n) \leq n^2 \mathbb{D}[Z].
\]
Since \(\sum_{n=1}^\infty \frac{1}{n^2}\) is a convergent series and \(\mathbb{D}[Z]<\infty\), the series \(\sum_{n=1}^\infty {P}(A_n)<\infty\).

By the Borel-Cantelli Lemma \cite{Borel1909},\cite{Cantelli1917}, this implies that
\[
{P}\left(\limsup_{n \to \infty} A_n\right) = 0.
\]
This indicates that:
\[
\left\vert\int_{0}^\infty X(t) \, dt - \int_{0}^\infty \mu(t) \, dt\right\vert < \frac{1}{n} \text{ eventually for almost all } n.
\]

As $n$ goes to infinity, $\frac{1}{n}$ approaches zero. Hence, the absolute difference between the random integral and its expected value becomes arbitrarily small:
\[\left\vert\int_{0}^{\infty} X(t) d t-\int_{0}^{\infty} \mu(t) d t\right\vert<\epsilon,\]
 for any $\epsilon>0$.

Hence,
\[
\int_{0}^\infty X(t) \, dt \text{ converges almost surely to } \int_{0}^\infty \mu(t) \, dt.
\]
\end{proof}

\begin{theorem}\label{variance in motion}
     The variance of the estimator of the unknown mean almost surely tends to 0. In other words, 
          \begin{equation}\label{variance}
              \lim _{t \rightarrow \infty} \frac{B({t})}{\kappa({t})\left(\alpha({t})-1\right)}=0,
          \end{equation}
          where $\frac{B({t})}{\kappa({t})(\alpha({t})-1)}$ is a random variable representing the variance of the estimator of the unknown mean at the time $t>0$.
\end{theorem}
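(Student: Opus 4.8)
The plan is to bound the numerator $B(t)$ linearly in $t$, show that the denominator $\kappa(t)(\alpha(t)-1)$ grows quadratically, and then conclude by a squeeze argument. First I would integrate the auxiliary equations from Proposition \ref{continuous belief motion}: from $\dot{\kappa}=1$ and $\dot{\alpha}=\tfrac12$ with the stated initial data one gets $\kappa(t)=t+\kappa_0$ and $\alpha(t)=t/2+\alpha_0$, so that for $t$ large enough $\kappa(t)(\alpha(t)-1)=(t+\kappa_0)(t/2+\alpha_0-1)>0$ and this quantity grows like $t^2/2$. For the numerator, integrating $\dot{\beta}$ gives
\[
B(t)=\beta_0+\int_0^t \frac{\kappa(\tau)\,(X(\tau)-M(\tau))^2}{2(\kappa(\tau)+1)}\,d\tau,
\]
and since $0\le \kappa(\tau)/(\kappa(\tau)+1)\le 1$, this yields $0\le B(t)\le \beta_0+\tfrac12\int_0^t (X(\tau)-M(\tau))^2\,d\tau$.

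Next I would estimate $\int_0^t (X(\tau)-M(\tau))^2\,d\tau$ from above. Using the elementary inequality $(X(\tau)-M(\tau))^2\le 2(X(\tau)-\mu)^2+2(\mu-M(\tau))^2$, the integral splits into two pieces. For the first, $\{(X(\tau)-\mu)^2\}$ is an i.i.d.\ process with finite mean $\sigma^2$, so by the Strong Law of Large Numbers argument already used in Proposition \ref{ergodic} one has $\tfrac1t\int_0^t (X(\tau)-\mu)^2\,d\tau\to\sigma^2$ almost surely, hence this contribution is $O(t)$ a.s. For the second, Theorem \ref{thm estimation of unknown mean1} gives $M(\tau)\to\mu$ a.s., so $(\mu-M(\tau))^2\to 0$ a.s., and therefore by Cesàro averaging $\tfrac1t\int_0^t(\mu-M(\tau))^2\,d\tau\to 0$ a.s., i.e.\ this contribution is $o(t)$. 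Combining the two bounds, $\int_0^t (X(\tau)-M(\tau))^2\,d\tau=O(t)$ almost surely, and hence $B(t)=O(t)$ almost surely.

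Finally I would put the pieces together: for all sufficiently large $t$,
\[
0\le \frac{B(t)}{\kappa(t)(\alpha(t)-1)}\le \frac{\beta_0+\tfrac12\int_0^t (X(\tau)-M(\tau))^2\,d\tau}{(t+\kappa_0)(t/2+\alpha_0-1)},
\]
and the right-hand side is a ratio of an $O(t)$ numerator to a $\Theta(t^2)$ denominator, so it tends to $0$ almost surely; the squeeze theorem then gives $\lim_{t\to\infty} B(t)/\bigl(\kappa(t)(\alpha(t)-1)\bigr)=0$. The main obstacle is the middle step, namely rigorously justifying the almost-sure linear growth of $\int_0^t (X(\tau)-M(\tau))^2\,d\tau$: this needs the SLLN/ergodic theorem applied to the squared (non-Gaussian) process together with the Cesàro averaging of $(\mu-M(\tau))^2$, and — as elsewhere in the paper — one must be somewhat informal about the measurability of the continuous-time i.i.d.\ process appearing inside the integral. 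Everything else is elementary asymptotics.
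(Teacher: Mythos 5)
Your argument is correct (to the same standard of informality the paper itself adopts for continuous-time i.i.d.\ processes), but it is a genuinely different route from the one in the paper. The paper's proof splits the target ratio as $\frac{B(t)}{t}\cdot\frac{t}{\kappa(t)(\alpha(t)-1)}$, expands $(X(s)-M(s))^2$ into the three terms $X^2$, $-2XM$, $M^2$, and for each resulting integrand computes the integral of its expectation and of its variance explicitly, invoking its Proposition \ref{finite} (Chebyshev plus Borel--Cantelli) to conclude that $\lim_{t\to\infty}B(t)/t$ is finite, while the denominator over $t$ diverges. You instead work pathwise: bound $B(t)\le\beta_0+\tfrac12\int_0^t(X(\tau)-M(\tau))^2\,d\tau$ via $\kappa/(\kappa+1)\le1$, split with $(X-M)^2\le2(X-\mu)^2+2(\mu-M)^2$, handle the first piece by the strong law applied to the i.i.d.\ squared process (as in Proposition \ref{ergodic}) and the second by Theorem \ref{thm estimation of unknown mean1} plus Ces\`aro averaging, getting $B(t)=O(t)$ almost surely against a $\Theta(t^2)$ denominator. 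Your version is shorter and more elementary: it needs no moment computations for the cross and squared-estimator terms, and it sidesteps a weak point of the paper's argument, namely that Proposition \ref{finite} is stated for \emph{independent} Gaussian integrands, whereas the paper applies it to integrands involving $M(s)$, which depends on the whole past of $X$ and is correlated across times. What the paper's computation buys in exchange is explicit asymptotic information (e.g.\ the limits of the expectation integrals such as $\sigma^2+\mu^2$ and $\mu^2$), which your $O(t)$ bound does not provide but which is not needed for the theorem. Do note two small points if you write this up: you should state that $\kappa_0\ge0$ (so that $\kappa/(\kappa+1)\in[0,1]$), and that the SLLN step is applied to the non-Gaussian but i.i.d.\ integrable process $(X(\tau)-\mu)^2$, as you already flag; with those remarks your squeeze argument closes the proof.
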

\begin{proof}
  The proof is given in \ref{appendix:proof2}.

\end{proof}

\subsection{Strong Belief Convergence Theorems under Payoff Functions Uncertainty}

\begin{theorem}\label{variance in cost1}
For the Kalman filter modeled by (\ref{kalman}), the uncertainty \( P_i(t) \) in the belief about \( \tau_i(t) \) vanishes over time:
\begin{equation*}
    \lim_{t \to \infty} P_i(t) = 0, \quad \forall i \in N,
\end{equation*}
where \( P_i(t) \) quantifies the uncertainty in estimating \( \tau_i(t) \) based on available information at time \( t \).
\end{theorem}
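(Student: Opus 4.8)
The plan is to work directly with the scalar Riccati equation for $P_i(t)$ supplied by Proposition \ref{proposition_kalman_estimation}, namely $\dot{P}_i(t) = -(P_i(t))^2/R_i$ with $P_i(0) = P_i^0 > 0$, and to integrate it in closed form. Since $R_i > 0$ and the right-hand side is a polynomial in $P_i$ (hence locally Lipschitz), a unique solution exists locally; the first thing I would record is that $P_i(t)$ remains strictly positive for all $t \ge 0$. Indeed, $P_i \equiv 0$ is also a solution of the equation, so if $P_i(t_\ast) = 0$ for some $t_\ast$ then uniqueness would force $P_i \equiv 0$, contradicting $P_i^0 > 0$. Thus $P_i(t) > 0$ throughout, no sign change or finite-time blow-up occurs, and the solution is global on $[0,\infty)$.

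Next I would divide the equation by $(P_i(t))^2$, which is legitimate by the positivity just established, to obtain $-\dot{P}_i(t)/(P_i(t))^2 = 1/R_i$, that is, $\frac{d}{dt}\big(1/P_i(t)\big) = 1/R_i$. Integrating from $0$ to $t$ gives $1/P_i(t) = 1/P_i^0 + t/R_i$, whence the explicit formula
\[
P_i(t) = \frac{R_i\,P_i^0}{R_i + t\,P_i^0}.
\]
Letting $t \to \infty$ in this expression yields $\lim_{t\to\infty} P_i(t) = 0$, because the numerator is a positive constant while the denominator grows linearly in $t$. Since the argument applies verbatim for each $i \in N$, the theorem follows.

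It is worth noting that one can reach the same conclusion without solving the ODE: from $\dot{P}_i(t) \le 0$ the map $t \mapsto P_i(t)$ is nonincreasing and bounded below by $0$, hence converges to some limit $\ell \ge 0$; if $\ell > 0$, then $\dot{P}_i(t) \le -\ell^2/R_i < 0$ for all large $t$, which would drive $P_i(t) \to -\infty$, a contradiction, so $\ell = 0$. There is no genuinely hard step in either route; the only points needing a line of care are the global existence and strict positivity of $P_i$, both of which rest on the uniqueness of solutions together with $P_i^0 > 0$, and these also guarantee that the denominator $R_i + t\,P_i^0$ in the closed-form solution never vanishes.
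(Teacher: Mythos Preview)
Your proof is correct and follows essentially the same route as the paper: both separate variables in the scalar Riccati equation $\dot{P}_i = -P_i^2/R_i$, integrate to obtain the closed form $P_i(t) = R_i P_i^0/(R_i + t P_i^0)$, and read off the limit. Your write-up is in fact more careful than the paper's, since you justify the division by $P_i^2$ via a positivity/uniqueness argument and also supply an alternative monotonicity proof, neither of which appears in the original.
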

\begin{proof}
We analyze the differential equation governing the estimation error variance:
\begin{equation}\label{eq3}
    \dot{P}_i(t) = -\frac{(P_i(t))^2}{R_i},
\end{equation}
where \( R_i \) denotes the constant variance of the measurement noise.

Transforming and integrating the differential equation (\ref{eq3}):
\begin{equation*}\label{eq2}
    \frac{dP_i(t)}{P_i(t)^2} = -\frac{1}{R_i} dt.
\end{equation*}
Integrating from 0 to \( t \) with the initial condition \( P_i(0) = P_i^0 \) leads to:
\begin{equation*}
    P_i(t) = \frac{P_i^0 R_i}{tP_i^0 + R_i}.
\end{equation*}
This yields:
\begin{equation*}
    \lim_{t \to \infty} P_i(t) = 0.
\end{equation*}
\end{proof}

\begin{theorem}\label{thm estimation of unknown cost1}
For the Gaussian process \( Y_i(t) \) where \( t \geq 0 \), the estimated value \( \overline{T}_i(t) \), representing the prior estimation of \( \tau_i \) at time \( t \), converges to the true unknown value \( \tau_i \):
\begin{equation*}
    \lim_{t \to \infty} \overline{T}_i(t) = \tau_i, \quad \forall i \in N.
\end{equation*}
\end{theorem}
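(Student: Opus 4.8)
The plan is to follow the template of Theorem~\ref{thm estimation of unknown mean1}: reduce the claim to a law-of-large-numbers statement for the measurement noise by solving the filter equations in closed form. First I would substitute the observation model $y_i(t) = \tau_i + V_i(t)$ from~(\ref{kalman}) into the estimator equation of Proposition~\ref{proposition_kalman_estimation} and pass to the estimation error $e_i(t) := \overline{T}_i(t) - \tau_i$, which obeys the scalar linear ODE
\[
\dot{e}_i(t) = -\frac{P_i(t)}{R_i}\,e_i(t) + \frac{P_i(t)}{R_i}\,V_i(t), \qquad e_i(0) = \tau_i^0 - \tau_i .
\]

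Next I would insert the closed form $P_i(t) = \dfrac{P_i^0 R_i}{t P_i^0 + R_i}$ established in the proof of Theorem~\ref{variance in cost1}. Computing the integrating factor gives $\exp\!\big(\int_0^t \tfrac{P_i(s)}{R_i}\,ds\big) = \tfrac{t P_i^0 + R_i}{R_i}$, and the crucial simplification is that $\tfrac{t P_i^0 + R_i}{R_i}\cdot\tfrac{P_i(t)}{R_i} = \tfrac{P_i^0}{R_i}$ is constant in $t$. Integrating the resulting exact equation then yields the explicit representation
\[
e_i(t) = \frac{R_i}{t P_i^0 + R_i}\bigl(\tau_i^0 - \tau_i\bigr) + \frac{P_i^0}{t P_i^0 + R_i}\int_0^t V_i(s)\,ds .
\]

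Finally I would let $t \to \infty$. The first term is deterministic and decays like $1/t$. For the second I would rewrite it as $\tfrac{P_i^0 t}{t P_i^0 + R_i}\cdot\tfrac{1}{t}\int_0^t V_i(s)\,ds$, where the prefactor tends to $1$; since $\{V_i(t)\}$ is i.i.d.\ $N(0,R_i)$ and therefore strictly stationary and ergodic by Proposition~\ref{ergodic}, the time average $\tfrac{1}{t}\int_0^t V_i(s)\,ds$ converges almost surely to $\mathbb{E}[V_i(t)] = 0$. Hence $e_i(t) \to 0$ almost surely, i.e.\ $\overline{T}_i(t) \to \tau_i$, and the same reasoning applies for every $i \in N$.

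The main obstacle is the rigorous treatment of the continuous-time noise integral $\int_0^t V_i(s)\,ds$: a genuine continuous-time i.i.d.\ Gaussian process is not pathwise integrable, so to make ``$\tfrac1t\int_0^t V_i(s)\,ds \to 0$ almost surely'' precise one must either read the integral as a Brownian increment $W_i(t)$ with $\operatorname{Var} W_i(t) = R_i t$ (whence $W_i(t)/t \to 0$ a.s.\ by the strong law for Brownian motion) or pass through the discrete-time approximation used elsewhere in the paper and apply the SLLN as in Proposition~\ref{ergodic}. Everything else --- the linear-ODE solution and the elementary limits --- is routine and directly parallels the computation in Theorem~\ref{thm estimation of unknown mean1}.
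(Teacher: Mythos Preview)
Your argument is essentially the paper's own proof: solve the linear filter ODE explicitly using the closed form of $P_i(t)$ from Theorem~\ref{variance in cost1} and then invoke the ergodic time-average convergence of Proposition~\ref{ergodic}. The only cosmetic difference is that the paper works directly with $\overline{T}_i(t)$ and $Y_i(s)$ (obtaining $\overline{T}_i(t)=\frac{P_i^0\int_0^t Y_i(s)\,ds}{tP_i^0+R_i}$, in fact dropping the initial-condition term you retain) and applies ergodicity to $Y_i$ with mean $\tau_i$, whereas you pass to the error $e_i$ and apply it to $V_i$ with mean $0$; your added remark about the rigorous interpretation of $\int_0^t V_i(s)\,ds$ is a point the paper simply leaves implicit.
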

\begin{proof}
Given the dynamic estimation model (\ref{esti_payoff}), it evolves according to:
\begin{equation}\label{eq8}
    \dot{\overline {T}}_i(t) = \frac{P_i(t)}{R_i}(Y_i(t) - \overline{T}_i(t)).
\end{equation}
Solving (\ref{eq8}), we obtain:
\[
\overline{T}_i(t) = \frac{P_i^0 \int_0^t Y_i(s) \, ds}{t P_i^0 + R_i}.
\]
As \( t \) approaches infinity:
\[
\lim_{t \to \infty} \overline{T}_i(t) = \lim_{t \to \infty} \frac{P_i^0 \frac{\int_0^t Y_i(s) \, ds}{t}}{P_i^0 + \frac{R_i}{t}} = \tau_i,
\]
where the ergodic property in Proposition \ref{ergodic} of \( Y_i(s) \) implies its time average converges to \(\tau_i\).
\end{proof}

\section{Pollution Control in Continuous Time with Motion-Payoff Uncertainty}\label{s4}
Having established the continuous-time differential game model, we now focus on quantifying the motion and payoff uncertainties in pollution control. 

\subsection{Uncertainty Quantification}
Pollution accumulation over time can be modeled in a continuous framework, enhancing our previous analyses in discrete-time dynamics as discussed in \cite{zhou2024enhancing}. In this extended model, we consider a pollution control game involving \(n\) strategic players.

We denote by \(\overline S(t)\) the stock of pollution at time \(t\in [0,\infty)\), which is contributed to by various countries affecting the same pollution stock. 

The evolution of \(\overline S(k)\) of subgame $\Gamma(S(t), t, \infty)$ is governed by the following linear differential equation:
\begin{equation}\label{motion_continuous}
\dot{\overline S}(k) = \overline x(t) \sum_{i=1}^n u_i(k,\overline S(k);\tau_i) - \left(1 - \overline x(t) \delta\right) \overline S(k),\quad    \overline S(t)=S(t), 
\end{equation}
where \( 1-\delta \) represents the proportion of pollution that decays over a unit of time. The term \( \overline{x}(t) = \mathbb{E}[\widetilde{\eta} \vert \overline{\theta}(t)] \) denotes the conditional expectation of the random variable \( \widetilde{\eta} \) at time \( t \). 

Firstly, we will provide an economic interpretation of our proposed motion equation, considering both the decay rate ($1-\delta$) and absorption rate ($\overline x$) simultaneously in continuous time.

The second term on the right-hand side of equation (\ref{motion_continuous}), \( \left( 1 - \overline{x}(t) \delta \right) \overline{S}(t) \), represents the rate at which pollution naturally disappears. Here, \( \delta \) is the fraction of pollution that remains in nature, while \( \overline{x}(t) \) is the pollution absorption rate. The term \( 1 - \overline{x}(t) \delta \) reflects the portion of pollution that disappears over time, with the negative sign indicating a reduction in pollution, scaled by the current pollution level \( \overline{S}(t) \).

We extend our investigation to a continuous-time framework for a pollution control game involving $n\geq 2$ players, each facing uncertainty about the other players' types except their own. 

 The expected payoff with continuous belief updating for player \(i\), who knows only his type but not those of others, is given by:
\begin{equation}\label{expected_payoff_continuous}
\begin{split}
    &  \mathbb{E}[J_{i,B}^{\tau_i}(u_1, \ldots, u_n; t, S(t))] =\\
     & \int_{t}^{\infty} e^{-\rho (k-t)}\left [u_{i}(k, \overline S(k); \tau_i)\left(a_i-u_{i}-\sum_{j\not = i}^{n} {u_{j}(k, \overline S(k); \overline \tau_j(t) )}\right)-{\tau_i}\overline S(k) \right ] dk, 
\end{split}
\end{equation}
where \(\overline \tau_j(t)=\int_{ T_j} p_i^t(\tau_j\vert {y_j(t)}) \cdot  \tau_j d \tau_j\) represents the belief parameters of player \(i\) at time \(t\) about the type \(\tau_j\) of player \(j\) based on the signal \(y_j(t)\).


\subsection{Constructing Nash Equilibrium with Continuous Bayesian Updating in Pollution Control}

\begin{proposition}\label{optimal control}
The feedback Nash equilibrium with continuous Bayesian updating 
\[
\left(u_1^*(t, S; \tau_1, \{\overline{\tau}_j(t)\}_{j \in N}, \overline{x}(t)), \dots, u_n^*(t, S; \tau_n, \{\overline{\tau}_j(t)\}_{j \in N}, \overline{x}(t))\right)
\]
for the pollution control differential game with motion-payoff uncertainty defined by (\ref{motion_continuous})--(\ref{expected_payoff_continuous}) is given by:
\begin{equation}\label{ne_cbu}
\begin{split}
    &u_i^*(t, S; \tau_i, \{\overline{\tau}_j(t)\}_{j \in N}, \overline{x}(t)) =\\ &a_i - \frac{a}{n+1} - \frac{n^2 - n + 2}{4(n+1)} \overline{c}(t) \sum_{j=1}^n \overline{\tau}_j(t) + \frac{1}{2} \overline{c}(t) \left( \frac{n}{2} \overline{\tau}_i(t) + \tau_i \right),
\end{split}
\end{equation}
where $a = \sum_{i=1}^n a_i$, $\overline c(t)=-\frac{\overline x(t)}{1- \overline x(t)\delta-\rho}$.

To ensure the non-negativity of the optimal control, \( u_i^*(t, \cdot) \geq 0 \), the following two sufficient conditions must hold:

1. For the parameter \(a_i\):
\[
\underline{a} - \frac{n}{n+1} \overline{a} > 0,
\]
where \( \underline{a} = \min\limits_{i \in \{1,2,..,n\}} a_i \) and \( \overline{a} = \max\limits_{i \in \{1,2,..,n\}} a_i \).

2. For the type \( \tau_i \):
\[
-\frac{n^2 - n + 2}{4(n+1)} n \underline q + \left( \frac{n}{2} + 1 \right) \overline{Q} < 0,
\]
where \( \underline{q} = \min\limits_{i \in \{1,2,..,n\}} q_i \) and \( \overline{Q} = \max\limits_{i \in \{1,2,..,n\}} Q_i \).

3. For the discount factor
\[
1-\rho\geq\delta>0.
\]
\end{proposition}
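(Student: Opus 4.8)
The plan is to solve the subgame $\Gamma(S,t,\infty)$ by the Hamilton--Jacobi--Bellman method, treating the common beliefs $\overline x(t)$ and $\{\overline\tau_j(t)\}_{j\in N}$ as constants throughout the subgame as the model prescribes, and then read off the equilibrium strategy via Definition~\ref{assumption_optimal_control_continuous}. Since the integrand of the expected payoff (\ref{expected_payoff_continuous}) is quadratic in the controls and affine in $\overline S$, while the dynamics (\ref{motion_continuous}) are affine, I would seek affine value functions $V_i(\overline S)=\alpha_i\overline S+\beta_i$, $i\in N$, solving the coupled stationary HJB equations
\[
\rho V_i(\overline S)=\max_{u_i}\left\{u_i\Big(a_i-u_i-\sum_{j\neq i}u_j\Big)-\tau_i\overline S+V_i'(\overline S)\Big[\overline x(t)\sum_{\ell=1}^{n}u_\ell-(1-\overline x(t)\delta)\overline S\Big]\right\}.
\]
As dictated by (\ref{expected_payoff_continuous}), for $j\neq i$ the symbol $u_j$ here denotes opponent $j$'s equilibrium feedback with the public belief $\overline\tau_j(t)$ substituted for $\tau_j$. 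Equating the coefficients of $\overline S$ on the two sides forces $\alpha_i$ to be a linear function of $\tau_i$ alone, and the combination $\overline x(t)\alpha_i$ entering the controls is exactly what the statement abbreviates as $\overline c(t)\tau_i$ with $\overline c(t)=-\overline x(t)/(1-\overline x(t)\delta-\rho)$; the first-order condition in the bracket gives $2u_i+\sum_{j\neq i}u_j=a_i+\overline x(t)\alpha_i$, which already shows the optimal controls are state-independent, consistent with the state-free form of (\ref{ne_cbu}).

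The substantive step is resolving the fixed point created by the informational asymmetry: in (\ref{expected_payoff_continuous}) player $i$ plays against the \emph{belief-evaluated} opponent strategies $u_j(\cdot;\overline\tau_j(t))$ while keeping its true type $\tau_i$ in its own Bellman equation. I would do this in two stages, paralleling the discrete-time computation of \cite{zhou2024enhancing}. First, the belief-evaluated controls $w_j:=u_j^*(\cdot;\overline\tau_j(t))$ satisfy among themselves the symmetric linear system $2w_j+\sum_{\ell\neq j}w_\ell=a_j+\overline c(t)\overline\tau_j(t)$, whose matrix is the identity plus a rank-one term and is readily inverted. Second, substituting $\{w_j\}_{j\neq i}$ into player $i$'s own best-response relation $u_i^*=\tfrac12\big(a_i+\overline c(t)\tau_i-\sum_{j\neq i}w_j\big)$ and simplifying with $a=\sum_i a_i$ yields formula (\ref{ne_cbu}), the weights $\tfrac12$, $\tfrac n4$ and $\tfrac{n^2-n+2}{4(n+1)}$ coming out of the two nested inversions. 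The existence direction is then closed by a standard verification argument: the closed-loop version of (\ref{motion_continuous}) is linear with dissipative drift $-(1-\overline x(t)\delta)\le 0$, so the affine $V_i$ satisfy the relevant transversality and growth conditions and the tuple is an admissible feedback Nash equilibrium of $\Gamma(S,t,\infty)$ in the sense of Definition~\ref{def_11}.

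For the non-negativity assertions I would argue directly from (\ref{ne_cbu}), writing $u_i^*(t,\cdot)=\big(a_i-\tfrac{a}{n+1}\big)+\overline c(t)\,\Phi_i(t)$ with $\Phi_i(t)=-\tfrac{n^2-n+2}{4(n+1)}\sum_{j}\overline\tau_j(t)+\tfrac n4\overline\tau_i(t)+\tfrac12\tau_i$. Condition~3, $1-\rho\ge\delta>0$, together with $\overline x(t)\in(0,1]$, gives $1-\overline x(t)\delta-\rho\ge 0$ and hence $\overline c(t)\le 0$, which pins down the sign of the belief-dependent term. Condition~1 makes the first bracket strictly positive, since $a_i\ge\underline a>\tfrac{n}{n+1}\overline a\ge\tfrac{a}{n+1}$ (the last inequality because $n\overline a\ge\sum_j a_j$). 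Bounding each of the non-negative estimates $\overline\tau_j(t)$ below by $\underline q\ge 0$ and each of $\overline\tau_i(t)$, $\tau_i$ above by $\overline Q$ gives $\Phi_i(t)\le-\tfrac{n^2-n+2}{4(n+1)}n\underline q+(\tfrac n4+\tfrac12)\overline Q\le-\tfrac{n^2-n+2}{4(n+1)}n\underline q+(\tfrac n2+1)\overline Q<0$, the last inequality being Condition~2; multiplying the negative $\Phi_i(t)$ by $\overline c(t)\le 0$ adds a non-negative quantity, so $u_i^*(t,\cdot)\ge a_i-\tfrac{a}{n+1}>0$.

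I expect the main obstacle to be the bookkeeping in the fixed-point step --- steering the asymmetric ``own type versus common belief'' structure into exactly the weights of (\ref{ne_cbu}) rather than the naive symmetric ones that a single round of best response would give --- and, for the non-negativity part, arguing that the belief trajectories $\overline\tau_j(t)$ stay in the admissible window $[\underline q,\overline Q]$ (provided by the per-player bounds $q_i,Q_i$) for \emph{every} $t$, since they evolve through the continuous Bayesian/Kalman updates of Propositions~\ref{continuous belief motion}--\ref{proposition_kalman_estimation} and the sign argument above must hold uniformly in time; this is where truncated priors, or a probabilistic qualifier on the statement, would be needed.
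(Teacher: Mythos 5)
Your route is essentially the paper's own: the appendix proof likewise solves the subgame via the HJB equations with an affine value-function ansatz $V_i=A_i(\tau_i)\overline S+B_i$, posits controls linear in the own type, lets player $i$ face the belief-evaluated opponent strategies, and resolves the resulting rank-one linear system for the intercepts; your system $2w_j+\sum_{\ell\neq j}w_\ell=a_j+\overline c(t)\overline\tau_j(t)$ is exactly the paper's equations (\ref{f1})--(\ref{sum}) in different variables. You in fact go beyond the printed proof on the non-negativity part, which the appendix never addresses: your argument ($\overline c(t)\le 0$ under condition 3, then bounding the belief-dependent bracket using conditions 1--2) is sound modulo the caveat you yourself raise, namely that the belief trajectories must stay in $[\underline q,\overline Q]$ for all $t$ (the quantities $q_i,Q_i$ are never even defined in the paper), and modulo the degenerate case $1-\overline x(t)\delta-\rho=0$. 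The one step you should compute rather than assert is the final bookkeeping: carrying out the two inversions you describe gives
\begin{equation*}
u_i^{*}=a_i-\frac{a}{n+1}-\frac{\overline c(t)}{n+1}\sum_{j=1}^{n}\overline\tau_j(t)+\frac{\overline c(t)}{2}\left(\overline\tau_i(t)+\tau_i\right),
\end{equation*}
and this is also what the paper's own appendix computation yields; it coincides with the displayed coefficients $\frac{n^2-n+2}{4(n+1)}$ and $\frac{n}{2}\overline\tau_i(t)$ in (\ref{ne_cbu}) only when $n=2$. So your claim that those weights ``come out of the two nested inversions'' fails for general $n$; this reflects an inconsistency between the proposition as stated and the derivation (yours and the paper's alike), and your write-up would need either to correct the target formula or to flag the discrepancy explicitly.
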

\begin{proof}
The proof is given in \ref{appendix:proof3}.
\end{proof}

To compare the Nash equilibrium with continuous Bayesian updating in the pollution control game under uncertainty with the Nash equilibrium in the classical pollution control game, where all parameters are known. The known-state Nash equilibrium, in the absence of uncertainties, is defined for each player as:
\begin{equation}
   u_i^{*,known}(\{\tau_j\}_{j \in N}, t) = a_i - \frac{a}{n+1} - \frac{n^2 - n + 2}{4(n+1)}  c \sum_{j=1}^n \tau_j + \frac{n+2}{4}  c \tau_i,
\end{equation}
where \( c \) is a constant defined as \(  c= -\frac{\mu}{1 - \mu \delta-\rho} \). In this case, the optimal control does not depend on the current state, as we consider a linear-state game; hence, the state is omitted from the parameters of the optimal control expression.

In the pollution control model with motion-payoff uncertainty, the optimal control for each player \( i \) at time \( t \) is independent of the current state and system state. Therefore, the Nash equilibrium with continuous Bayesian updating can be written as \( u_i^*(\tau_i, \{\overline{\tau}_j(t)\}_{j \in N}, \overline{x}(t)) \). We now prove the convergence of the optimal control to the known-state Nash equilibrium control. Specifically, we show that:
\begin{theorem}\label{overall convergence}
As the time progresses and the players' beliefs converge, the Nash equilibrium with continuous Bayesian updating approaches the known-state optimal control. Specifically, we have the following convergence result:
\begin{equation}\label{control_convergen}
    \lim_{t \to \infty} u_i^*(\tau_i, \{\overline{\tau}_j(t)\}_{j \in N}, \overline{x}(t)) = u_i^{*,known}(\{\tau_j\}_{j \in N}, t).
\end{equation}
\end{theorem}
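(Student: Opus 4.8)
The plan is to treat Theorem \ref{overall convergence} as a direct consequence of the two belief-convergence results already established (Theorems \ref{thm estimation of unknown mean1} and \ref{thm estimation of unknown cost1}) together with the continuity of the closed-form expression \eqref{ne_cbu} in its belief arguments. Since the right-hand side $u_i^{*,known}$ is obtained from \eqref{ne_cbu} by the formal substitutions $\overline{x}(t)\mapsto\mu$, $\overline{c}(t)\mapsto c$, and $\overline{\tau}_j(t)\mapsto\tau_j$ for every $j\in N$, it suffices to show that each of these quantities converges (almost surely) to its limiting value and that the map sending the belief data to $u_i^*$ is continuous on a neighborhood of the limit point.

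First I would record that $\overline{x}(t)=\mathbb{E}[\widetilde\eta\mid\overline\theta(t)]=M(t)$, the posterior mean belief of the unknown mean, so that Theorem \ref{thm estimation of unknown mean1} gives $\overline{x}(t)\to\mu$ almost surely as $t\to\infty$. Next, note that $\overline{c}(t)=-\overline{x}(t)/(1-\overline{x}(t)\delta-\rho)$ is the composition of $\overline{x}(t)$ with the rational function $r(x)=-x/(1-\delta x-\rho)$, which is continuous wherever $1-\delta x-\rho\neq 0$. Under hypothesis~3 of Proposition \ref{optimal control}, namely $1-\rho\geq\delta>0$, one has $1-\delta x-\rho\geq\delta(1-x)>0$ for $x\in(0,1)$, so the denominator stays bounded away from zero along the belief trajectory and $r$ is continuous at $\mu$; hence $\overline{c}(t)\to r(\mu)=c$ almost surely. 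Then I would invoke Theorem \ref{thm estimation of unknown cost1} to obtain $\overline{\tau}_j(t)\to\tau_j$ almost surely for every $j\in N$, in particular for $j=i$.

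With these three limits in hand, applying the algebra of limits to the four terms of \eqref{ne_cbu} gives
\[
\lim_{t\to\infty}u_i^*\big(\tau_i,\{\overline{\tau}_j(t)\}_{j\in N},\overline{x}(t)\big)
= a_i-\frac{a}{n+1}-\frac{n^2-n+2}{4(n+1)}\,c\sum_{j=1}^n\tau_j+\frac12\,c\Big(\frac n2\tau_i+\tau_i\Big),
\]
and since $\tfrac12 c\big(\tfrac n2\tau_i+\tau_i\big)=\tfrac{n+2}{4}c\,\tau_i$, the right-hand side is exactly $u_i^{*,known}(\{\tau_j\}_{j\in N},t)$, which establishes \eqref{control_convergen}.

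The only genuinely delicate point---and the one I would be most careful about---is the well-posedness and continuity of $\overline{c}(t)$: one must ensure that the denominator $1-\overline{x}(t)\delta-\rho$ neither vanishes nor degenerates along the entire trajectory (not merely in the limit), which is precisely where hypothesis~3 and the range constraint $\overline{x}(t)\in(0,1]$ enter; a secondary point is to phrase the conclusion in the almost-sure sense, inherited from the strong (SLLN-based) convergence in Theorems \ref{thm estimation of unknown mean1} and \ref{thm estimation of unknown cost1}. Everything else is a routine passage to the limit in a finite sum of products of convergent sequences.
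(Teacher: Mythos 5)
Your proposal is correct and takes essentially the same route as the paper: the paper's own proof is a one-line argument that, since $\overline{\tau}_j(t)\to\tau_j$ and $\overline{c}(t)\to c$ (by the earlier belief-convergence theorems), one may pass to the limit in the closed-form expression \eqref{ne_cbu}. Your write-up merely fills in the details the paper omits, most notably the justification that $\overline{c}(t)\to c$ via continuity of $x\mapsto -x/(1-\delta x-\rho)$ and the nonvanishing of the denominator $1-\overline{x}(t)\delta-\rho$ under the hypothesis $1-\rho\geq\delta>0$, and the almost-sure phrasing of the limits.
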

\begin{proof}
   As we have previously established that \( \overline{\tau}_i(t) \to \tau_i \) and \( \overline{c}(t) \to c(t) \), it follows that the optimal control converges to the known-state Nash equilibrium control. Therefore, we can easily obtain equation (\ref{control_convergen}).
\end{proof}
\subsection{Comparative Simulations of Dynamic and Continuous Bayesian Updating}

We use a step function to define how the signals received by players at discrete time points are applied to continuous time. In this scenario, we consider a time interval of length 10, and the time step for new signals, denoted as $\Delta t$, is 0.02. Therefore, players will receive 500 signals regarding each type of uncertainty.

\begin{figure}[ht] 
    \centering    
    \subfloat[Evolution of ecological uncertainty signals] 
    {
        \begin{minipage}[t]{0.5\textwidth}
            \centering          
            \includegraphics[width=0.8\textwidth]{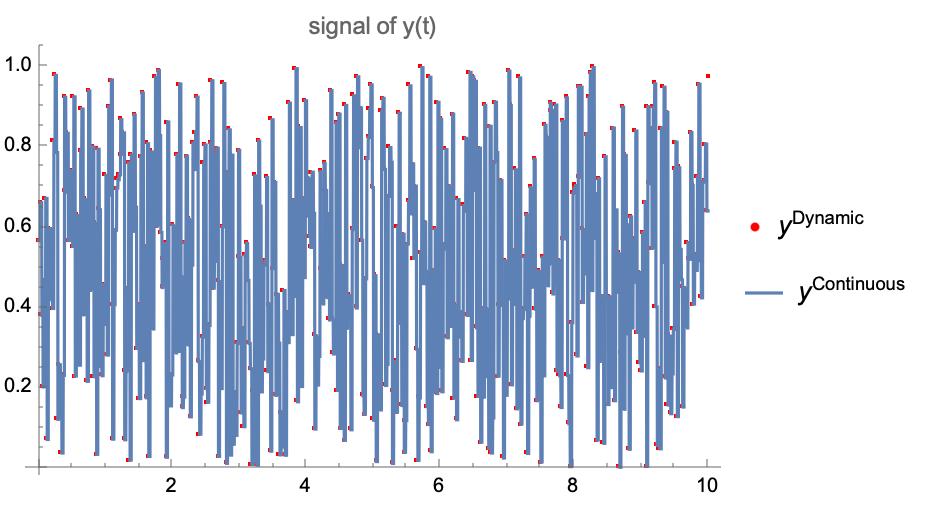} 
            \label{fig1.1}
        \end{minipage}%
    }
    \subfloat[Reception of the control cost-related signal] 
    {
        \begin{minipage}[t]{0.5\textwidth}
            \centering      
            \includegraphics[width=0.8\textwidth]{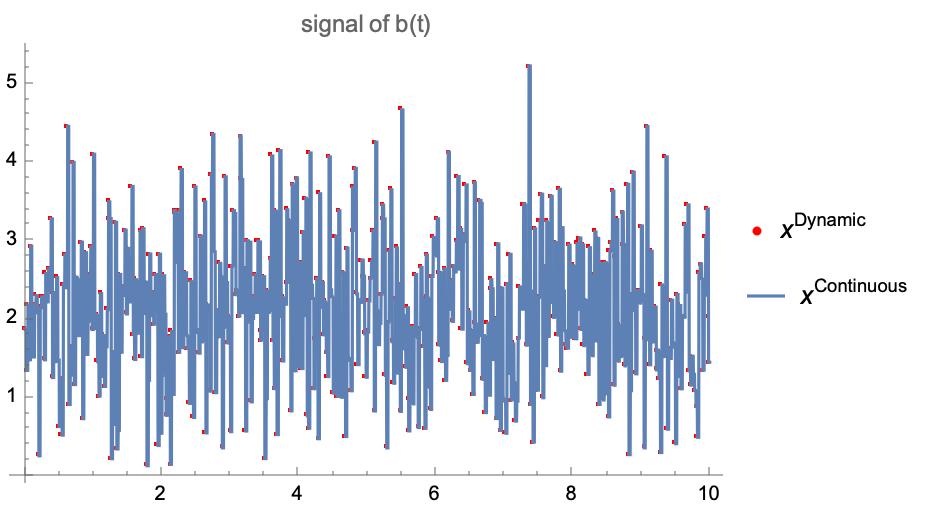}   
            \label{fig1.2}
        \end{minipage}
    }%
    \caption{Signal analysis in pollution control games} 
    \label{fig1}  
\end{figure}

Figure \ref{fig1.1} illustrates the actual observations of ecological uncertainty as received by the players at each moment, while Figure \ref{fig1.2} displays the noisy signals about the cost parameter $\tau_i$, $i\in N$, derived from sources such as market data or environmental measurements, which often embody inherent uncertainties. In both figures, red dots represent signals received by players at discrete moments, and the blue lines indicate signals received over continuous time.

We employ a step function approach to depict the relationship between signals received at discrete times and their representation over continuous time intervals. Specifically, a signal received at a discrete moment \( t \) is modeled as being continuously received over the interval \([t, t + \Delta t]\), for any \( t \in [0, \infty) \), where \( \Delta t > 0 \) is a positive duration.

\begin{figure}[H] 
    \centering    
    \subfloat[Comparison of ecological uncertainty estimation in motion equation] 
    {
        \begin{minipage}[t]{0.5\textwidth}
            \centering          
            \includegraphics[width=0.8\textwidth]{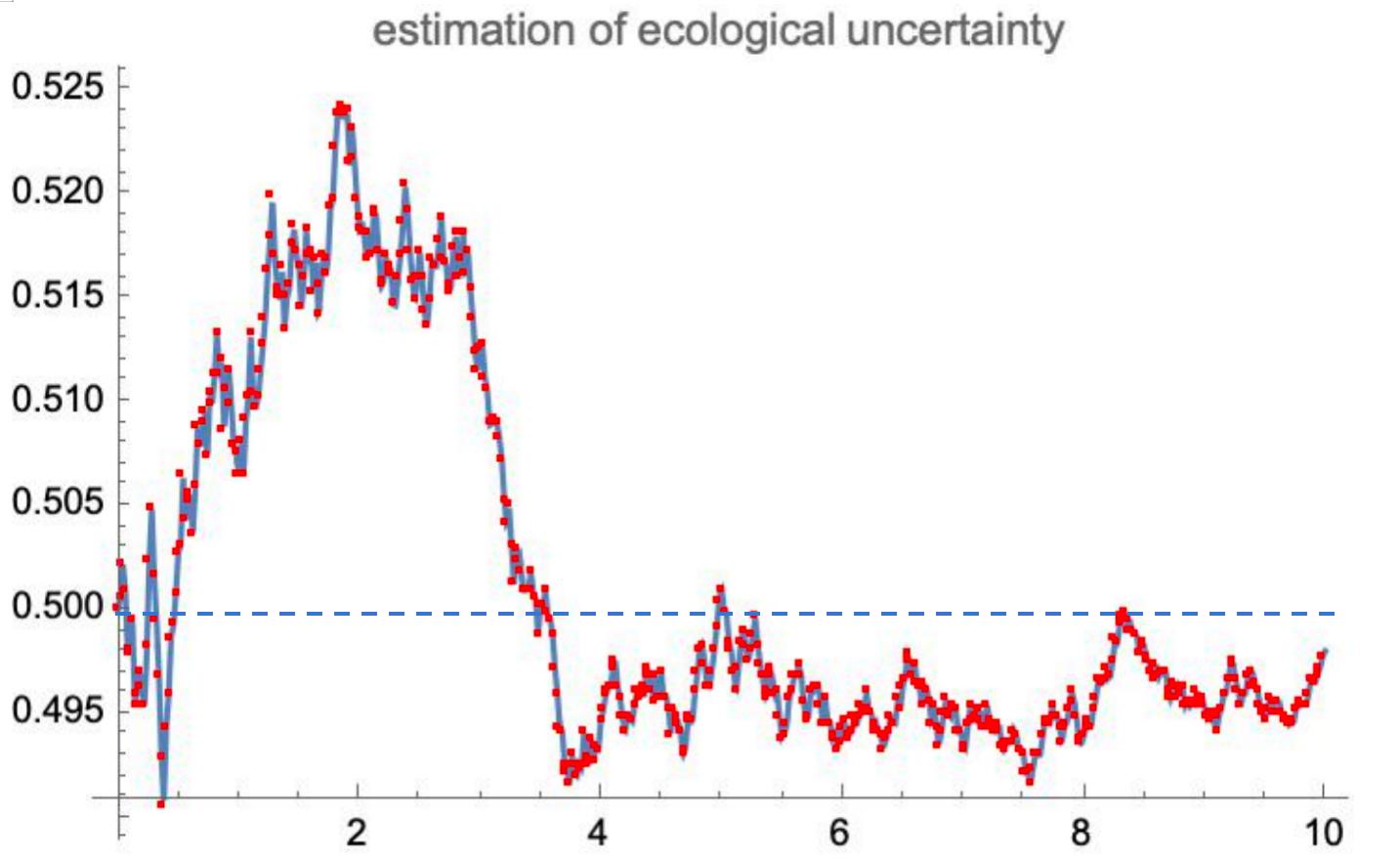} 
            \label{fig2.1}
        \end{minipage}%
    }
    \subfloat[Comparison of player's estimation of other player's cost parameter] 
    {
        \begin{minipage}[t]{0.5\textwidth}
            \centering      
            \includegraphics[width=0.8\textwidth]{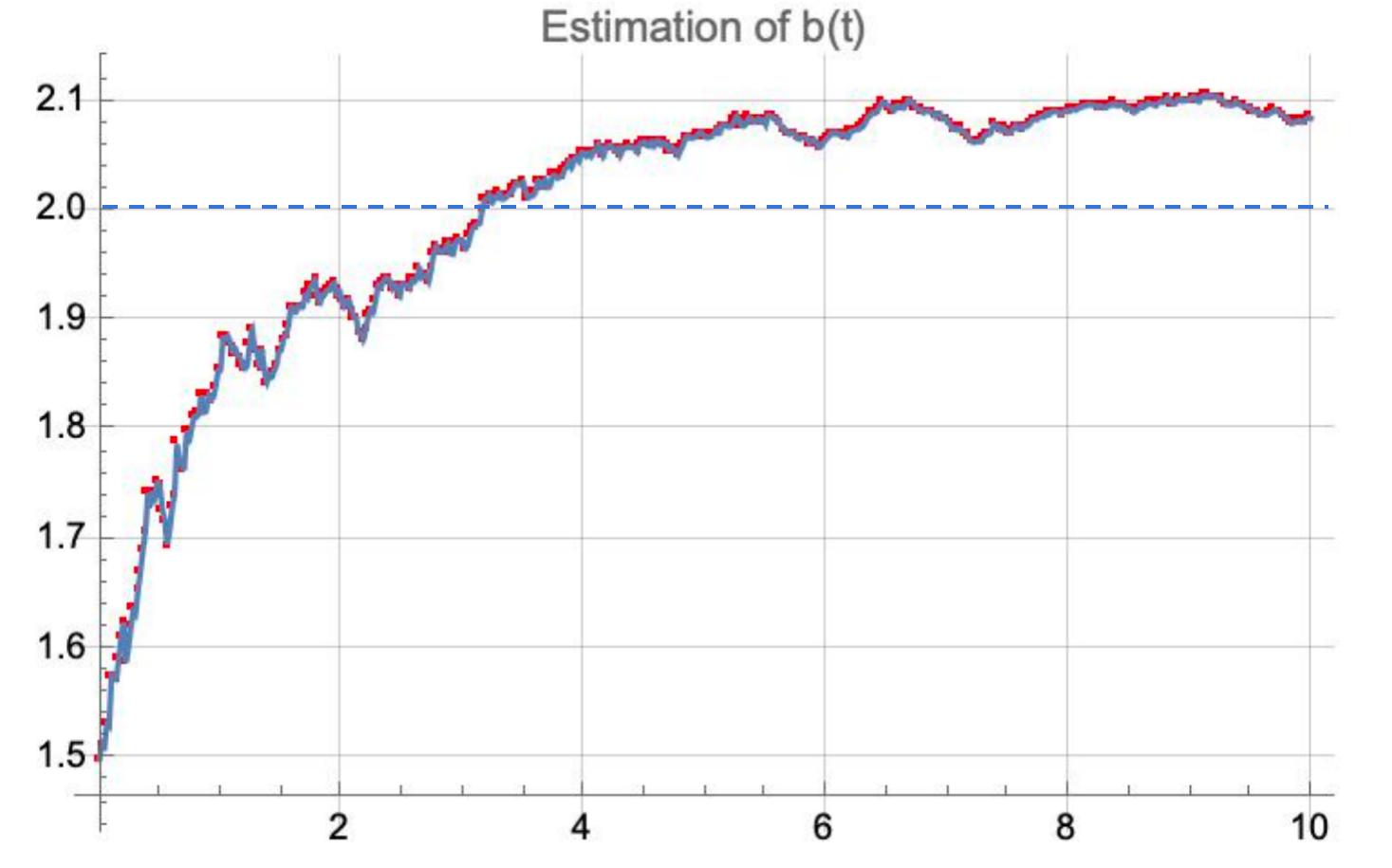}   
            \label{fig2.2}
        \end{minipage}
    }%
    
    \caption{Analysis of dynamic system parameter estimation} 
    \label{fig2}  
\end{figure}

In Figs. \ref{fig2.1} and \ref{fig2.2}, the player's assessments of ecological uncertainty and pollution management costs converge closely as the time interval \(\Delta t\) decreases. This alignment is achieved through the application of Bayesian updating in both continuous and discrete settings, particularly with a small interval of 0.02. Such convergence underscores the efficacy of Bayesian updating methods in ensuring consistent estimates across different time scales.

\begin{figure}[H] 
    \centering    
    \subfloat[Estimation of ecological Uncertainty] 
    {
        \begin{minipage}[t]{0.5\textwidth}
            \centering          
            \includegraphics[width=0.8\textwidth]{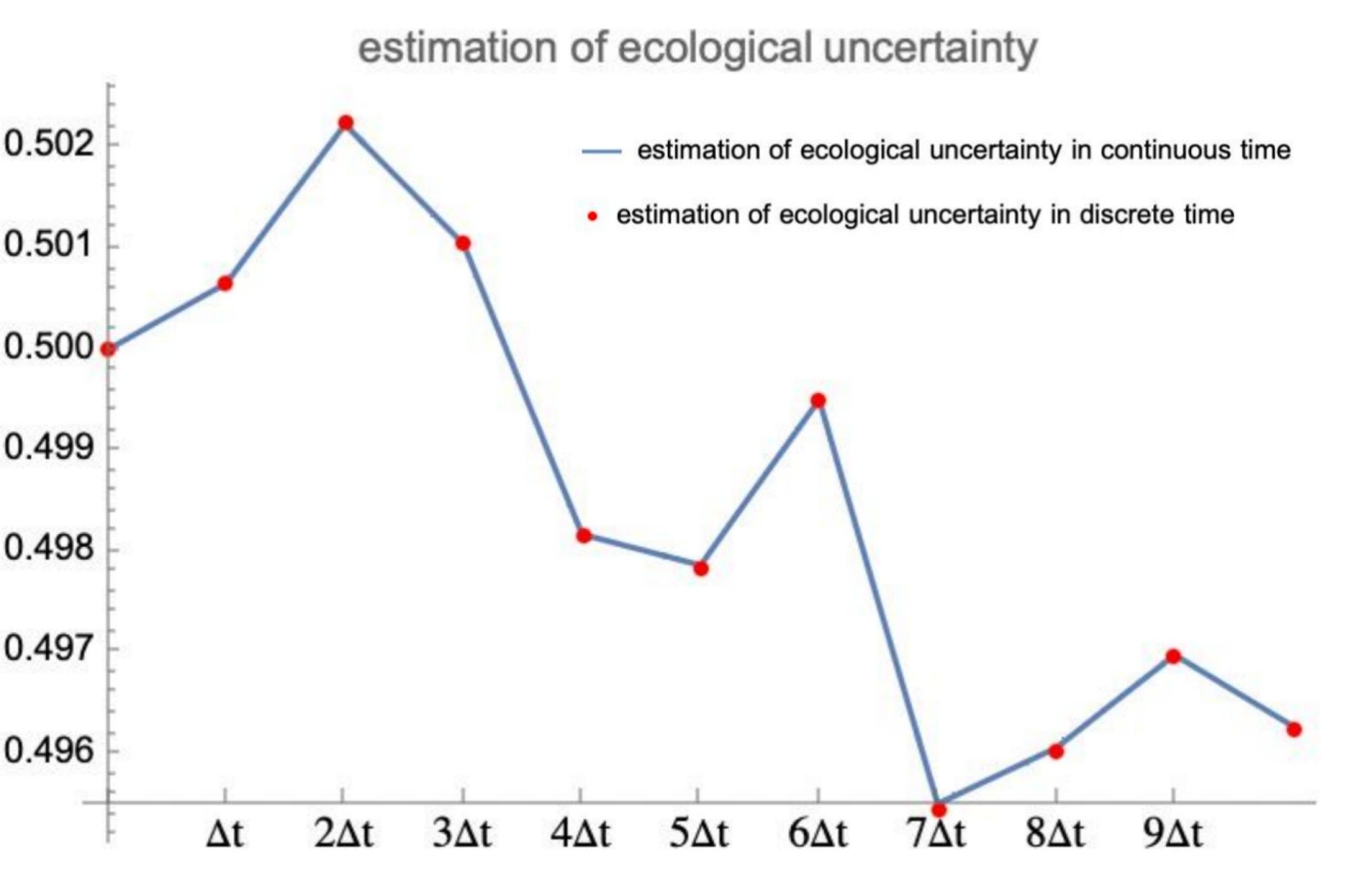} 
            \label{fig2.11}
        \end{minipage}%
    }
    \subfloat[Estimation of other players' cost parameters] 
    {
        \begin{minipage}[t]{0.5\textwidth}
            \centering      
            \includegraphics[width=0.8\textwidth]{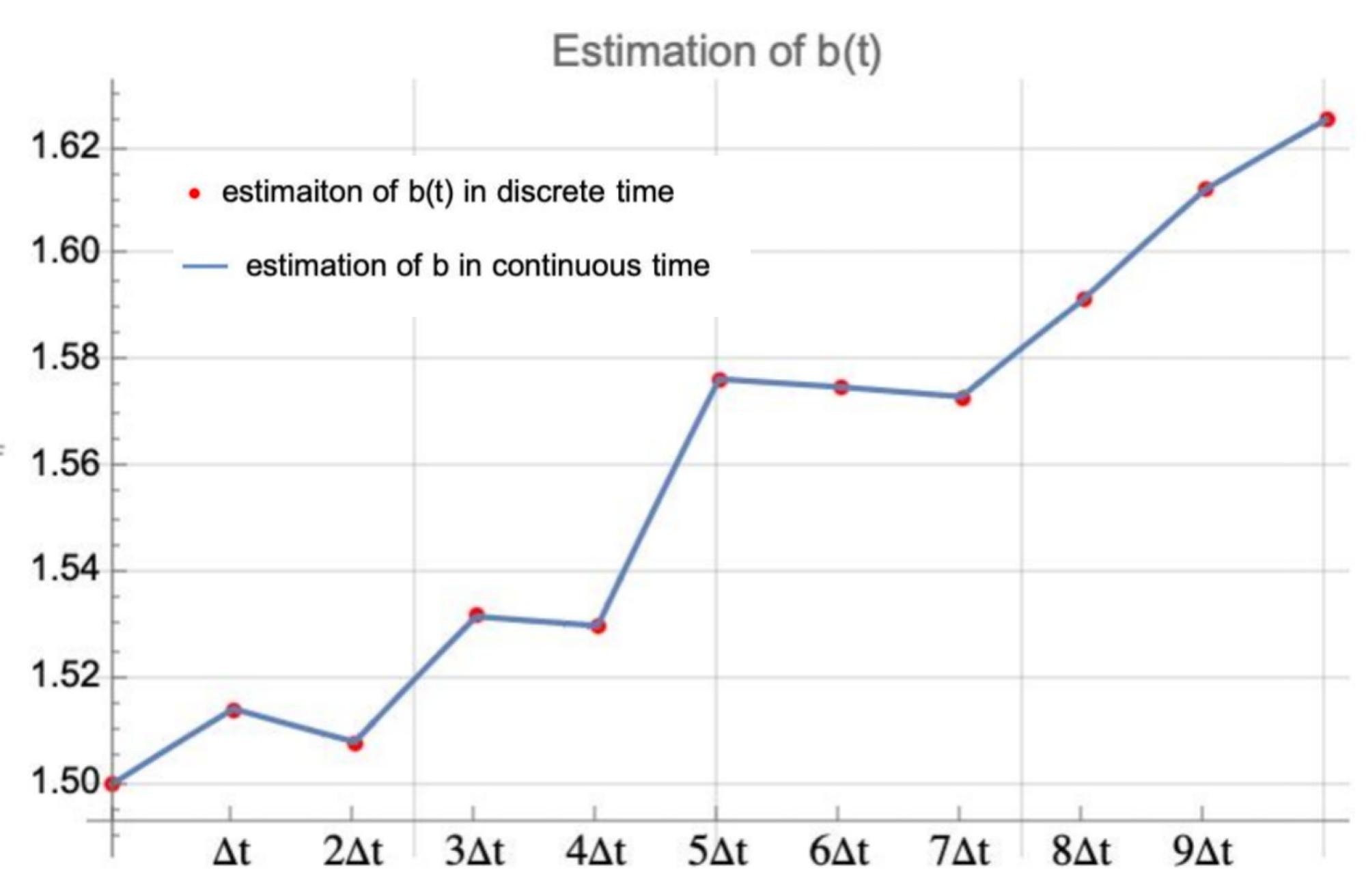}   
            \label{fig2.22}
        \end{minipage}
    }%
    
    \caption{Detailed comparative analysis of parameter estimations} 
    \label{fig22}  
\end{figure}

To more clearly illustrate the results, we present a magnified section in Figs. \ref{fig2.11} and \ref{fig2.22}, showing outcomes from 0 to $10\Delta t$. The zoomed-in view reveals that, when the interval between received signals is sufficiently small, players' beliefs about unknown parameters align closely in both continuous and discrete-time settings under continuous Bayesian updating and dynamic approaches. 

\begin{figure}[H] 
    \centering    
    \subfloat[Variance in estimation of ecological uncertainty] 
    {
        \begin{minipage}[t]{0.5\textwidth}
            \centering          
            \includegraphics[width=0.8\textwidth]{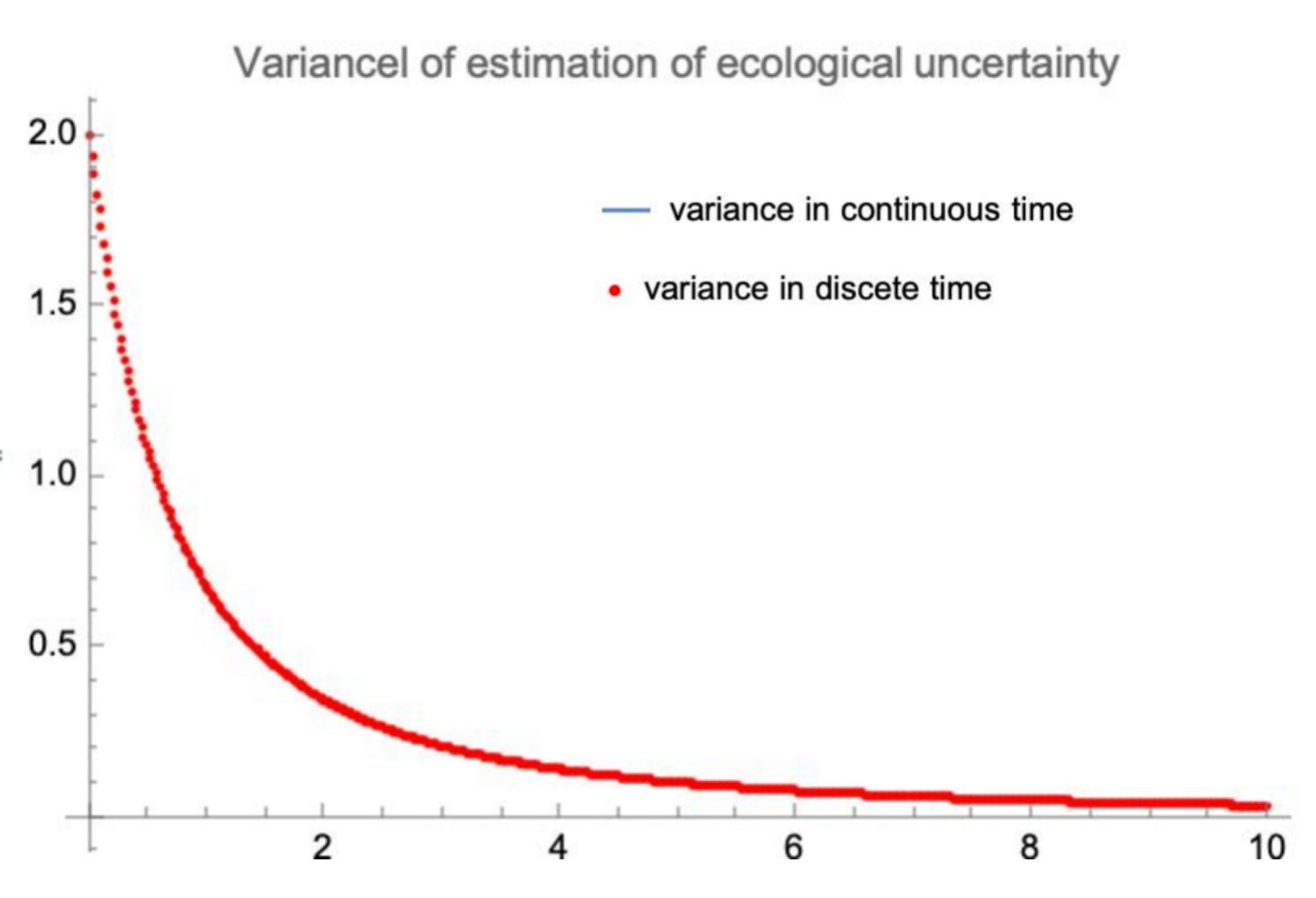} 
            \label{fig3.1}
        \end{minipage}%
    }
    \subfloat[Variance in player's cost estimation] 
    {
        \begin{minipage}[t]{0.5\textwidth}
            \centering      
            \includegraphics[width=0.8\textwidth]{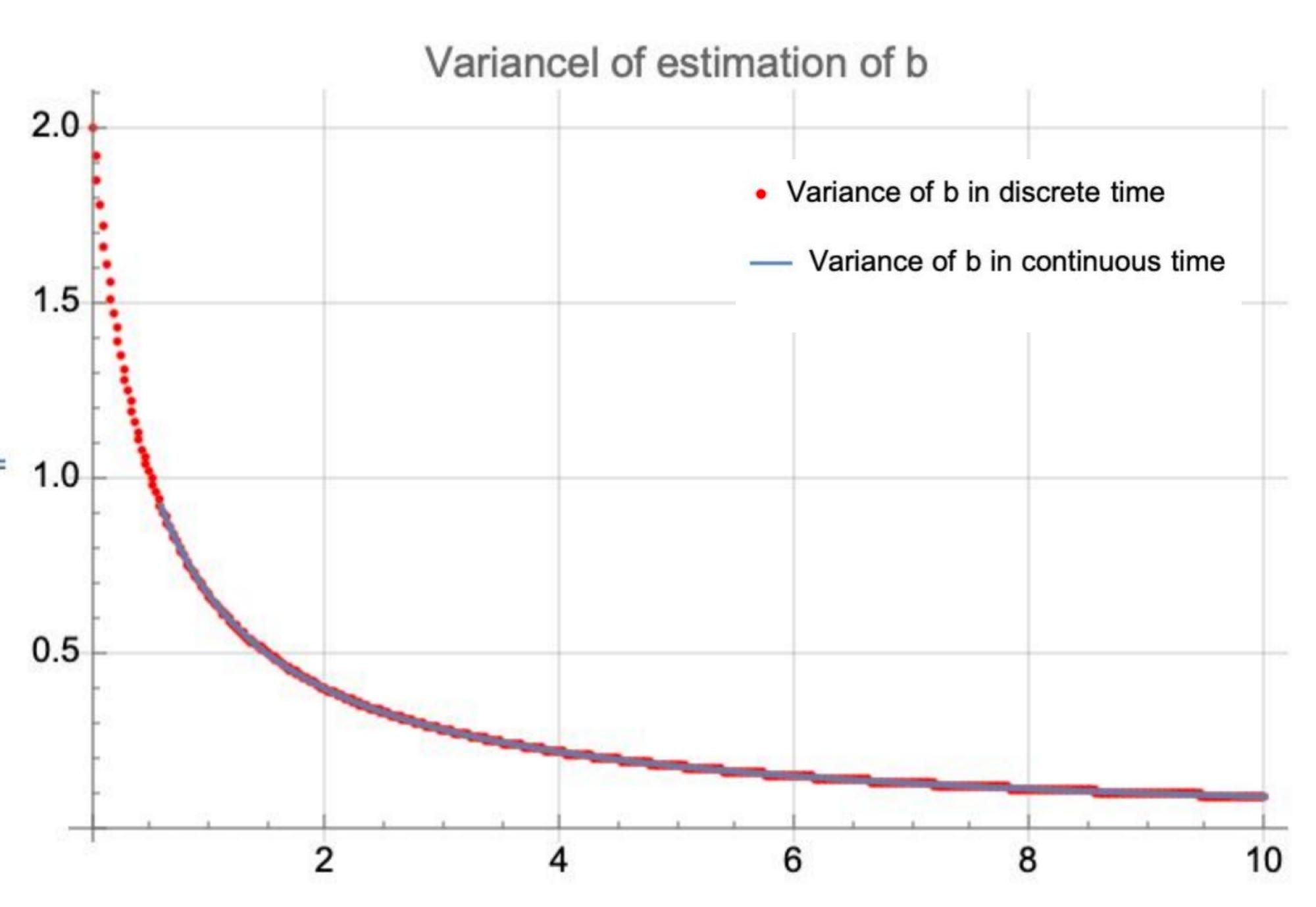}   
            \label{fig3.2}
        \end{minipage}
    }%
    
    \caption{Advanced variance analysis in ecological uncertainty and cost estimation} 
    \label{fig3}  
\end{figure}
In both Fig. \ref{fig3.1} and Fig. \ref{fig3.2}, there's a notable and consistent trend observed-over time, the range of variance in the players' estimates steadily narrows, eventually approaching zero. This trend suggests that in pollution control games, as players continually observe and adapt based on new information, their predictions concerning environmental challenges and pollution control costs become increasingly precise, moving towards a shared understanding. 

\begin{figure}[H]
\centering
\includegraphics[width=0.5\linewidth]{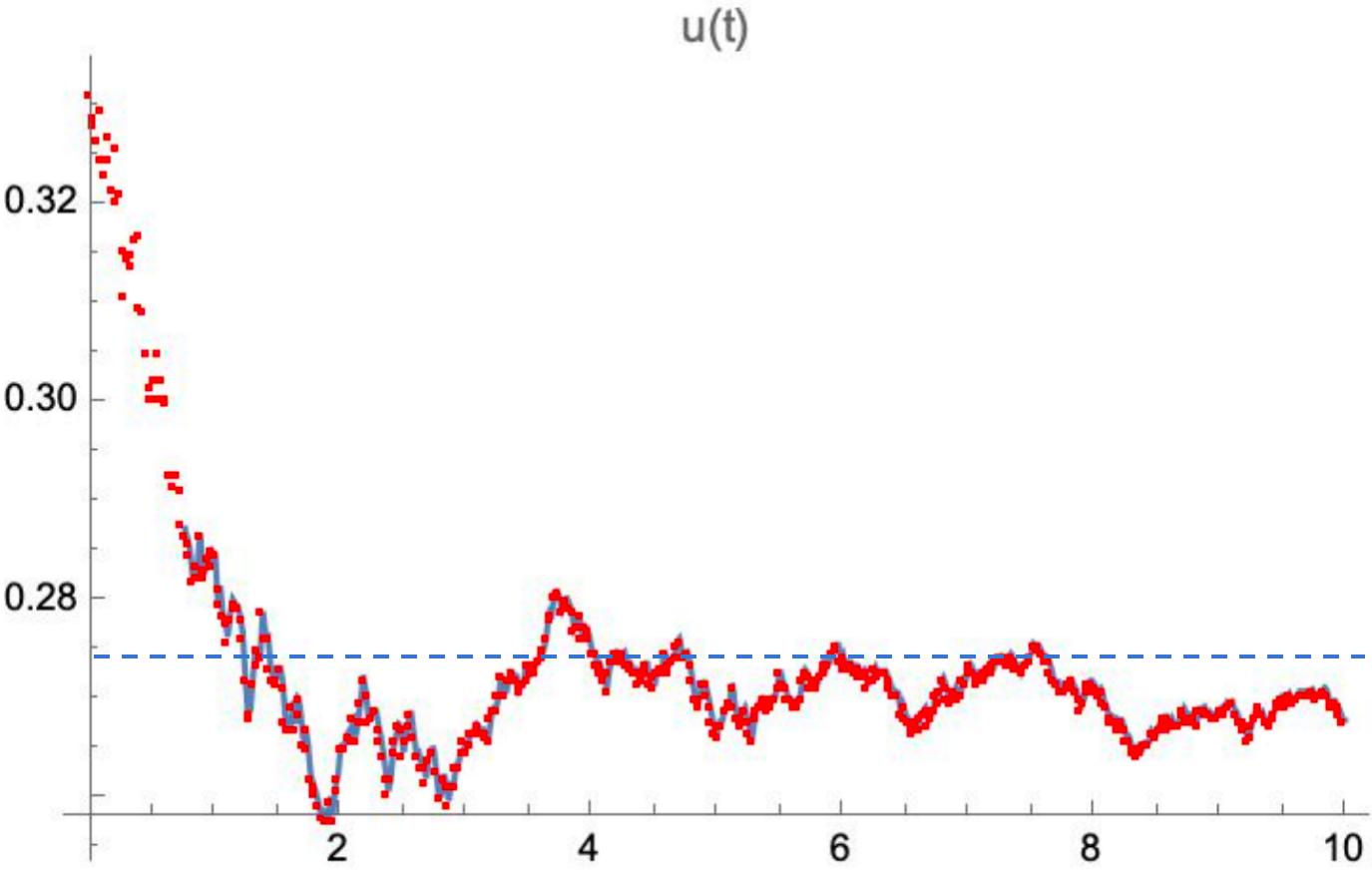}
\caption{Comparative controls for continuous vs. discrete time}
\label{fig4}
\end{figure}

From Fig. \ref{fig4}, we can observe that the pollution emissions by players, who receive signals about unknown parameters discretely, will align with those in continuous scenarios. This convergence occurs as players receive sufficient signals over time, allowing their estimates of the unknown parameters to gradually approximate the true values, regardless of whether the signals are received discretely or continuously. 

\begin{figure}[H]
\centering
\includegraphics[width=0.5\linewidth]{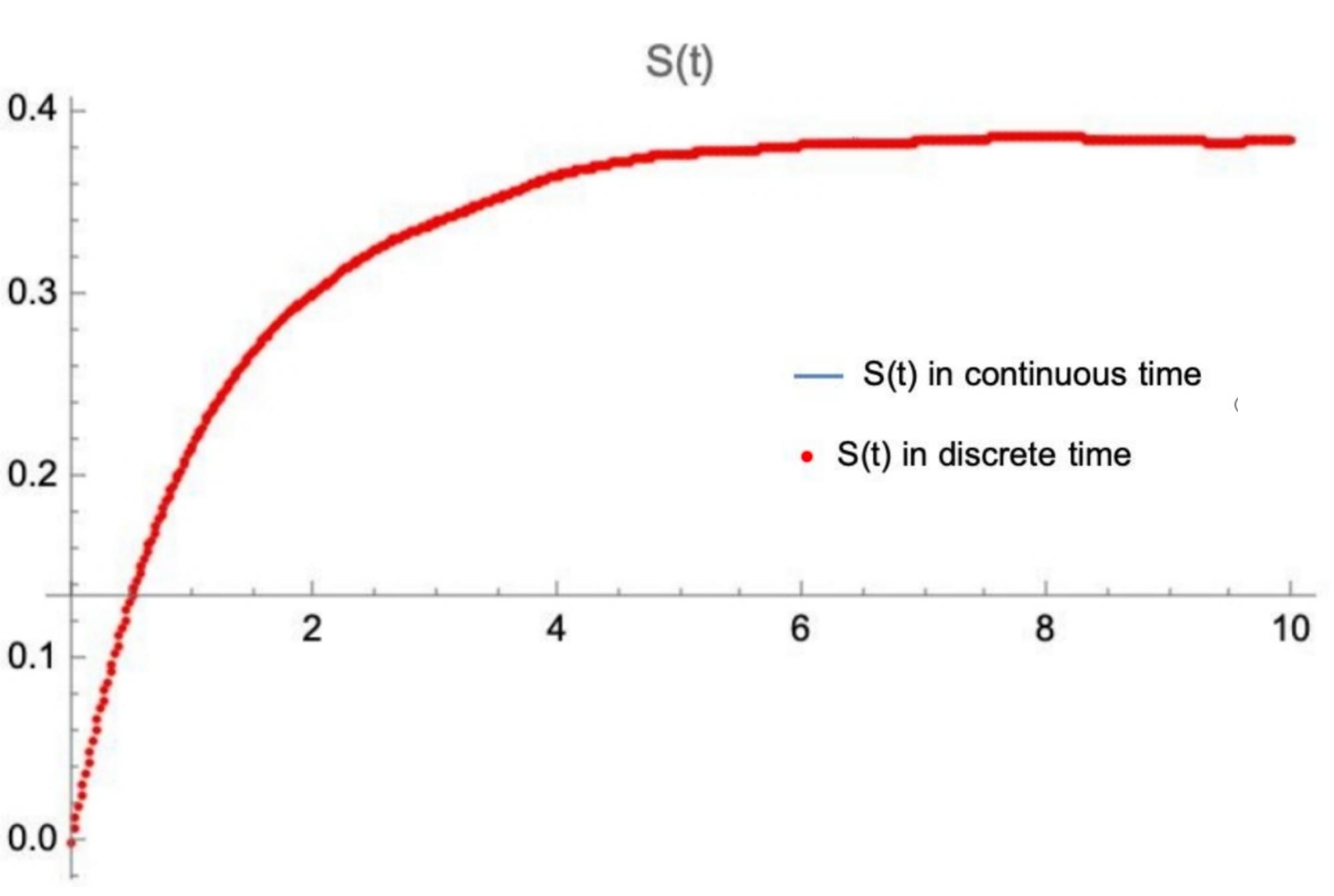}
\caption{Comparative trajectories for continuous vs. discrete time}
\label{fig5}
\end{figure}

In Fig. \ref{fig5}, we observe that the pollution storage levels in both discrete and continuous time settings tend to align closely over time. This alignment occurs because the intervals at which players receive signals in discrete settings are sufficiently small, allowing for consistent estimates of unknown parameters across both discrete and continuous frameworks. Consequently, this consistency in parameter estimation leads to similar optimal controls, ensuring that the state trajectories in both scenarios are highly congruent.

\section{Conclusions}\label{s5}
In this paper, we have explored the motion equation and payoff functions in continuous-time pollution control games under uncertainties. By developing the motion equations incorporating ecological uncertainties, we utilized continuous Bayesian updating to refine players' beliefs about unknown parameters. It was demonstrated that as players continuously receive signals, their beliefs gradually converge to the true values of these parameters, stabilizing over time. This convergence was supported by classical probability theorems, including Fubini's Theorem, Chebyshev’s Inequality, and the Borel-Cantelli Lemma, etc. Furthermore, we constructed Nash Equilibrium strategies for asymmetric players based on updated beliefs at any given time, which guided optimal control actions.

\section*{Acknowledgements} 
The work of Hongwei Gao is supported by National Natural Science Foundation of China (Grant No. 72171126). The work of Zhang Ye is supported by the Shenzhen Sci-Tech Fund (Grant No. RCJC20231211090030059) and the National Natural Science Foundation of China (Grant No. 12171036).

\bibliographystyle{plain}
\bibliography{sample}


\appendix
 \section{Proof of Theorem \ref{variance in motion}}
\label{appendix:proof2}

We will demonstrate equation (\ref{variance}) by dividing the analysis into two distinct parts. First, we establish that
\begin{equation*}\label{step1}
    \lim_{t \to \infty} \frac{B(t)}{t}
\end{equation*}
is finite. Subsequently, we show that
\begin{equation}\label{step2}
    \lim_{t \to \infty} \frac{\kappa(t)(\alpha(t)-1)}{t}
\end{equation}
is infinite. Based on these results, given that the ratio of a finite value to an infinite value approaches zero, the limit of (\ref{variance}) tends to zero as \(t \to \infty\).

\textbf{Step 1}:
Starting with equation (\ref{belief_equation}), we express \(B(t)\) as follows:
\begin{equation*}\label{bt}
    B(t) = \int_0^t \frac{(k_0+s)(X(s)-M(s))^2}{k_0+s+1}ds.
\end{equation*}
Our goal is to analyze each term individually. To facilitate this, we rewrite \( \frac{B(t)}{t} \) in terms of its quadratic components:
\begin{equation}\label{bt2}
    \frac{B(t)}{t} = \int_0^t \frac{(k_0+s)X^2(s)}{t(k_0+s+1)}ds - \int_0^t \frac{2(k_0+s)X(s)M(s)}{t(k_0+s+1)}ds + \int_0^t \frac{(k_0+s)M^2(s)}{t(k_0+s+1)}ds.
\end{equation}

\textbf{Step 1.1}:
We begin by analyzing the first integral in equation (\ref{bt2}). For any given \(t \geq 1\), we define a new random variable:
\[
Y_1(s) = \frac{(k_0+s)X^2(s)}{t(k_0+s+1)}.
\]

The expected value of \(Y_1(s)\) is:
\[
\mathbb{E}[Y_1(s)] = \frac{(k_0 + s)(\sigma^2 + \mu^2)}{t(k_0 + s + 1)}.
\]

To analyze the integral of \(\mathbb{E}[Y_1(s)]\) from 0 to \(t\), we compute:
\begin{equation*}
\begin{split}
    \int_0^t \mathbb{E}[Y_1(s)] \, ds &= (\sigma^2 + \mu^2) \int_0^t \frac{k_0 + s}{t (k_0 + s + 1)} \, ds\\
    &= (\sigma^2 + \mu^2)\left(1 - \frac{1}{t} \left[\log(k_0 + t + 1) - \log(k_0 + 1)\right]\right).
\end{split}
\end{equation*}

Applying  L'H\^opital's Rule \cite{LHospital1696} to the limit of the integral as \(t \to \infty\):
\[
\lim_{t \to \infty} \frac{1}{t} \log \frac{k_0 + 1}{k_0 + t + 1} = \lim_{t \to \infty}-\frac{1}{(k_0 + t + 1)} = 0.
\]

Therefore, we conclude that:
\begin{equation}\label{1.1.1}
\lim_{t \to \infty} \int_{s=0}^t \mathbb{E}[Y_1(s)] \, ds = \sigma^2 + \mu^2 < \infty.
\end{equation}

Next, we demonstrate that
\[
\lim_{t \to \infty} \int_0^t \mathbb{D}[Y_1(s)] \, ds
\]
approaches a finite value.

After computation,
\begin{equation*}
    \begin{split}
       & \lim_{t \to \infty} \int_{s=0}^t \mathbb{D}[Y_1(s)] \, ds = \lim_{t \to \infty} \int_{s=0}^t\left(\frac{(k_0 + s)}{t (k_0 + s + 1)}\right)^2 (2\sigma^4 + 4\mu^2\sigma^2)\, ds\\
        &= \lim_{t \to \infty} \frac{2\sigma^4 + 4\mu^2\sigma^2}{t^2} \int_{s=0}^t \left( 1 - \frac{2}{k_0 + s + 1} + \frac{1}{(k_0 + s + 1)^2}\right)\, ds\\
        &=\lim_{t \to \infty} \frac{2\sigma^4 + 4\mu^2\sigma^2}{t^2} \left( t - 2[\log(k_0 + t + 1) - \log(k_0 + 1)] + \frac{1}{k_0 + 1} - \frac{1}{k_0 + t + 1} \right)\\
        &=(2\sigma^4 + 4\mu^2\sigma^2)\lim_{t \to \infty}\left(\frac{1}{t}+\frac{2}{t^2} \log \frac{k_0 + 1}{k_0 + t + 1}+\frac{1}{t^2(k_0 + 1)}-\frac{1}{t^2(k_0 +t+ 1)}\right)\\
        &=0,
    \end{split}
\end{equation*}
it follows that: 
\begin{equation}\label{1.1.2}
     \lim_{t \to \infty} \int_{s=0}^t \mathbb{D}[Y_1(s)] \, ds=0<\infty.
\end{equation}

From conditions (\ref{1.1.1}) and (\ref{1.1.2}), we conclude, according to Proposition \ref{finite}, that the integral 
\[
\lim_{t \to \infty} \int_0^t Y_1(s) \, ds < \infty.
\]

\textbf{Step 1.2:} Next, we proceed to demonstrate that the integral of the second term in equation (\ref{bt2}) also approaches a finite value as \(t \to \infty\). According to Proposition \ref{finite}, it is sufficient to show that both the expectation and variance integrals converge. 

We define:
\begin{equation*}
    Y_2(s) = \frac{X(s) \left(\int_0^s X(\tau) \, d\tau + \mu_0 k_0\right)}{t (k_0 + s + 1)},
\end{equation*} 
where $X(s) \sim N(\mu, \sigma^2)$.

The expected value of \(Y_2(s)\) is:
\[
\mathbb{E}[Y_2(s)] = \frac{\mu (\mu s + \mu_0 k_0)}{t (k_0 + s + 1)}.
\]

Integrating \(\mathbb{E}[Y_2(s)]\) from 0 to \(t\):
\begin{equation}\label{23}
    \begin{split}
        &\int_0^t \mathbb{E}[Y_2(s)] \, ds = \int_0^t \frac{\mu (\mu s + \mu_0 k_0)}{t (k_0 + s + 1)} \, ds\\
        &=\frac{\mu^2}{t}\int_0^t\frac{s }{(k_0 + s + 1)} \, ds+ \frac{\mu\mu_0 k_0}{t}\int_0^t \frac{1 }{(k_0 + s + 1)} \, ds\\
        &=\frac{\mu^2}{t}\left(t - (k_0 + 1) \left(\log(k_0 + 1 + t) - \log(k_0 + 1)\right)\right) \\&+\frac{\mu\mu_0 k_0}{t}\left(\log(k_0 + 1 + t) - \log(k_0 + 1)\right).      
    \end{split}
\end{equation}

As $t \to \infty$, we evaluate the limit of (\ref{23}):
\[
\lim_{t \to \infty} \left[ \mu^2 \left(1 - \frac{(k_0+1)}{t} \log\frac{k_0 + 1 + t}{k_0 + 1}\right) + \mu \mu_0 k_0 \frac{1}{t} \log\frac{k_0 + 1 + t}{k_0 + 1} \right] = \mu^2<\infty.
\]

Therefore, 
\begin{equation*}\label{1.2.1}
     \lim_{t \to \infty} \int_{s=0}^t \mathbb{E}[Y_2(s)] \, ds=\mu^2<\infty.
\end{equation*}

The variance of \(Y_2(s)\) is given by:
\[
\mathbb{D}[Y_2(s)] = \mathbb{E}[(Y_2(s))^2] - \left(\mathbb{E}[Y_2(s)]\right)^2,
\]
which can be computed as:
\[
\mathbb{D}[Y_2(s)] = \frac{\left(\sigma^2 + \mu^2\right)\left(s \sigma^2 + \left(\mu s + \mu_0 k_0\right)^2\right)}{t^2 \left(k_0 + s + 1\right)^2} - \left(\frac{\mu \left(\mu s + \mu_0 k_0\right)}{t \left(k_0 + s + 1\right)}\right)^2.
\]

The integral of \(\mathbb{D}[Y_2(s)]\) over the interval from 0 to \(t\) is given by:
\[
\int_0^t \mathbb{D}[Y_2(s)] ds = \int_0^t \left( \frac{\sigma^2 \left(s \sigma^2 + \left(\mu s + \mu_0 k_0\right)^2\right)}{t^2 \left(k_0 + s + 1\right)^2} - \left(\frac{\mu \left(\mu s + \mu_0 k_0\right)}{t \left(k_0 + s + 1\right)}\right)^2 \right) ds.
\]

After evaluating the integral, we obtain:
\[
\int_0^t \mathbb{D}[Y_2(s)] ds = \sigma^2 \frac{-2 k_0 (1 + k_0) \mu \mu_0 + k_0^2 \mu_0^2 - (1 + k_0) \sigma^2 + (1 + k_0) \mu^2 (1 + 2 k_0 + t)}{t (1 + k_0) (1 + k_0 + t)} 
\]
\[
+ \frac{\sigma^2 (\mu (\mu + 2 k_0 \mu - 2 k_0 \mu_0) - \sigma^2)(\log(1 + k_0) - \log(1 + k_0 + t))}{t^2}.
\]

As \(t \to \infty\), the limit of the integral is:
\[
\lim_{t \to \infty} \int_0^t \mathbb{D}[Y_2(s)] ds = 0 < \infty.
\]

Thus, we conclude that:
\[
\lim_{t \to \infty} \int_0^t Y_2(s) ds < \infty.
\]

\textbf{Step 1.3}:  
We now proceed to prove the finiteness of the integral for the last part of equation (\ref{bt2}). In this context, we define a new random variable:
\[
Y_3(s) = \frac{(k_0+s)M^2(s)}{t(k_0+s+1)},
\]
where \( M(s) \) is given by:
\[
M(s) = \frac{1}{k_0 + s+1} \left(\int_0^s X(\tau) \, d\tau + \mu_0 k_0\right).
\]

The expected value of \( Y_3(s) \), \( \mathbb{E}[Y_3(s)] \), is calculated as follows:
\[
\mathbb{E}[Y_3(s)] = \frac{(k_0+s)}{t(k_0+s+1)} \left[\left(\frac{\mu s + \mu_0 k_0}{k_0 + s+1}\right)^2 + \left(\frac{1}{k_0 + s+1}\right)^2 \sigma^2 s\right].
\]

We can compute:
\begin{equation*}
\begin{split}
   & \int_0^t \mathbb{E}[Y_3(s)] \, ds = \frac{1}{t}\left(\mu^2 t - k_0^2 (\mu - \mu_0)^2 \log(k_0) + (\mu + k_0 \mu - k_0 \mu_0)^2 \log(1 + k_0)\right)\\
& \quad + \frac{1}{t} \left(k_0^2 (\mu - \mu_0)^2 \log(k_0 + t) - (\mu + k_0 \mu - k_0 \mu_0)^2 \log(1 + k_0 + t)\right)\\
& \quad + \frac{\sigma^2}{t} \left(k_0 \log(k_0) - (1 + k_0) \log(1 + k_0) - k_0 \log(k_0 + t) + (1 + k_0) \log(1 + k_0 + t)\right).
\end{split}
\end{equation*}

Thus, we obtain:
\[
\lim_{t \to \infty} \int_0^t \mathbb{E}[Y_3(s)] \, ds = \mu^2 < \infty.
\]

Next, we explore the convergence of the variance. Given that \( M(s) \) is a transformation of a Gaussian process, we use the following expressions:
\[
\mathbb{E}[M(s)] = \frac{\mu s + \mu_0 k_0}{k_0 + s+1},
\]
and
\[
\mathbb{D}[M(s)] = \left(\frac{1}{k_0 + s+1}\right)^2 \sigma^2 s.
\]
Thus, the second moment of \( M(s) \) is:
\[
\mathbb{E}[M^2(s)] = \left(\mathbb{E}[M(s)]\right)^2 + \mathbb{D}[M(s)] = \left(\frac{\mu s + \mu_0 k_0}{k_0 + s+1}\right)^2 + \left(\frac{\sigma \sqrt{s}}{k_0 + s+1}\right)^2.
\]

The fourth moment of a Gaussian random variable \( W \) with mean \( \mu \) and variance \( \sigma^2 \) is:
\[
\mathbb{E}[W^4] = 3\sigma^4 + 6\sigma^2 \mu^2 + \mu^4.
\]
Thus, applying this to \( M(s) \):
\[
\mathbb{E}[M^4(s)] = 3\left(\frac{\sigma^2 s}{(k_0 + s+1)^2}\right)^2 + 6\left(\frac{\sigma^2 s}{(k_0 + s+1)^2}\right)\left(\frac{\mu s + \mu_0 k_0}{k_0 + s+1}\right)^2 + \left(\frac{\mu s + \mu_0 k_0}{k_0 + s+1}\right)^4.
\]

Hence, the second moment of \( Y_3(s) \) is:
\[
\mathbb{E}[(Y_3(s))^2] = \frac{(k_0+s)^2}{t^2(k_0+s+1)^2} \mathbb{E}[M^4(s)].
\]

The variance of \( Y_3(s) \), \( \mathbb{D}[Y_3(s)] \), is given by:
\[
\mathbb{D}[Y_3(s)] = \mathbb{E}[(Y_3(s))^2] - \left(\mathbb{E}[Y_3(s)]\right)^2.
\]

After simplifying the expressions, we find:
\begin{equation*}
\begin{split}
    &\mathbb{D}[Y_3(s)] =  \frac{(k_0+s)^2}{t^2(k_0+s+1)^2} \left[ 3\left(\frac{\sigma^2 s}{(k_0 + s+1)^2}\right)^2 + 6\left(\frac{\sigma^2 s}{(k_0 + s+1)^2}\right)\left(\frac{\mu s + \mu_0 k_0}{k_0 + s+1}\right)^2 \right.\\
    &\left. + \left(\frac{\mu s + \mu_0 k_0}{k_0 + s+1}\right)^4 \right] - \left( \frac{(k_0+s)}{t(k_0+s+1)} \left[\left(\frac{\mu s + \mu_0 k_0}{k_0 + s+1}\right)^2 + \left(\frac{\sigma \sqrt{s}}{k_0 + s+1}\right)^2\right] \right)^2.
\end{split}
\end{equation*}

Finally, after calculation, we get:
\begin{equation}\label{inte_var}
    \lim_{t \to \infty} \int_0^t \mathbb{D}[Y_3(s)] \, ds = 0 < \infty.
\end{equation}

Due to the complexity of the calculation, we omit the explicit form of the result. However, it is clear that the denominator of \( \mathbb{D}[Y_3(s)] \) is quadratic in time, while the numerator grows linearly with time. As time progresses, the limit in equation (\ref{inte_var}) approaches zero.

Thus, we conclude that:
\[
\lim_{t \to \infty} \int_0^t Y_3(s) \, ds = 0 < \infty.
\]

\textbf{Step 2:} 
Finally, we demonstrate that the limit in equation (\ref{step2}) is infinite. Consider the expression:
\begin{equation*}
    \begin{split}
        \lim_{t \to \infty} \frac{(k_0 + t)(\alpha_0 + \frac{1}{2} t - 1)}{t} &= \lim_{t \to \infty} \frac{k_0 \alpha_0 + k_0\left(\frac{1}{2} t - 1\right) + t \alpha_0 + \frac{1}{2} t^2 - t}{t} \\
        &= \infty.
    \end{split}
\end{equation*}

Thus, combining these results, we conclude that as \(t \to \infty\), the expression for (\ref{variance}) tends to zero, implying that our estimates stabilize over time.

\section{Proof of Proposition \ref{optimal control}}
\label{appendix:proof3}

 The value function of player $i$ starting at time $t\in [0,\infty)$ defined by (\ref{motion_continuous}) and (\ref{expected_payoff_continuous}) is given by
\begin{equation*}
\begin{split}
   &V_i(t, S, \tau_i) =\\
   &\max _{u_i(k,\cdot)}\int_{k=t}^\infty e^{-\rho (k-t)}\left [u_{i}(k, \overline S(k); \tau_i, \cdot)\left(a_i-u_{i}-\sum_{j\not = i}^{n} {u_{j}^*(k, \overline S(k); \overline \tau_j(k), \cdot)}\right)-{\tau_i}\overline S(k) \right ] dt,\\
   & \text{where} \quad \overline S(t)= S(t).
\end{split}
\end{equation*}

According to the Hamilton-Jacobi-Bellman (HJB) equations:
\begin{equation*}\label{value_2}
\begin{split}
    & \rho V_i(k, \overline S(k), \tau_i) =\\
     &\max_{u_i(k, \cdot)} \Bigg\{ \left[ u_i(k,\overline S(k);\tau_i, \cdot) \left( a_i - u_i(k,\overline S(k); \tau_i, \cdot) - \sum_{j \neq i} {u_{j}^*(k, \overline S(k); \overline \tau_j(t), \cdot)} \right) - \tau_i \overline S \right] \\
     & + \frac{\partial V_i}{\partial S}(k, \overline S(k), \tau_i) \overline x(t) \left( u_i(k,\overline S(k);\tau_i,\cdot)+\sum_{j\neq i}^n{u_{j}^*(k, \overline S(k); \overline \tau_j(t), \cdot)}\right) \\& -\frac{\partial V_i}{\partial S}(k, \overline S(k), \tau_i)\left(1 - \overline x(t)\delta\right)  \overline S\Bigg\}.
\end{split}
\end{equation*}

Assume that the value function is expressed in the form:
\begin{equation*}
    V_i(k, \overline S(k), \tau_i)=A_i(\tau_i)\overline S+B_i.
\end{equation*}

Substituting the conjectured value function into the HJB equation (\ref{value_22}), we obtain:
\begin{equation}\label{value_22}
    \begin{split}
        &\rho (A_i(\tau_i)\overline S+B_i) =\max_{u_i(t, \cdot)} \Bigg\{ \left[ u_i(k, \overline S;\tau_i, \cdot) \left( a_i - u_i - \sum_{j \neq i} {u_{j}^*(k, \overline S(k); \overline \tau_j(t), \cdot)} \right) - \tau_i \overline S \right] \\
     & + A_i(\tau_i) \left(\overline x(t) \left( u_i(k,\overline S;\tau_i,\cdot)+\sum_{j\neq i}^n{u_{j}^*(k, \overline S(k); \overline \tau_j(t), \cdot)}\right) -\left(1 - \overline x(t)\delta\right)  \overline S\right)\Bigg\}.
    \end{split}
\end{equation}

Suppose that \( u_i^*(k,\overline S; \tau_i,\cdot) \) is a linear function of \( \tau_i \), so we can simplify it as:
\begin{equation}\label{simplify_u_i}
    u_i^*=f_{1}^i(k)+f_2^i(k)\tau_i.
\end{equation}

Thus, we can write the value function as:
\begin{equation}\label{value_3}
    \begin{split}
        &\rho (A_i(\tau_i)\overline S+B_i) \\&= \max \limits_{u_i(k, \cdot)} \Bigg\{ u_{i}(k, \overline S(k); \tau_i, \cdot)\left(a_i-u_{i}(k, \overline S(k); \tau_i, \cdot)-\sum_{j\not = i}^{n} \left(f_1^j(k)+f_2^j(k)\overline\tau_j\right)\right)-{\tau_i}\overline S \\
        &+ A_i(\tau_i)\cdot\left(\overline x(t) u_i(t,\overline S;\tau_i, \cdot)+ \overline x(t) \sum_{j=1,\atop j\neq i}^n \left(f_1^j(k)+f_2^j(k)\overline\tau_j\right)-(1- \overline x(t)\delta)   \overline S\right)\Bigg\}.
    \end{split}
\end{equation}

The maximization problem in (\ref{value_3}) yields the following strategy for player \( i \):
\begin{equation}\label{generalize}
     u_i^*(k, \overline S(k); \tau_i, \{\overline\tau_j\}_{j \in N}, \overline x(t))=\frac{a_i-\sum_{ j\neq i}^n\left(f_1^j(k)+f_2^j(k)\overline\tau_j\right)+A_i(\tau_i)\overline x(t)}{2}.
\end{equation}

Substituting (\ref{generalize}) into (\ref{value_3}), we obtain the following differential equation for \( A_i(\tau_i) \):
\begin{equation*}
      \rho A_i(\tau_i)=\tau_i+A_i(\tau_i)(1- \overline x(t)\delta).
\end{equation*}

The solution is given by:
\begin{equation*}
    A_i(\tau_i)=-\frac{\tau_i}{1- \overline x(t)\delta-\rho}.
\end{equation*}

Therefore, the Nash equilibrium with continuous Bayesian updating of player $i\in N$ is:
\begin{equation}\label{simplify_u_i2}
    u_i^* = \frac{a_i}{2}-\frac{\sum_{j \neq i}^n \left( f_1^j(k) + f_2^j(k)\overline\tau_j \right)}{2}-\frac{ \tau_i\overline x(t) }{2(1- \overline x(t)\delta-\rho)}.
\end{equation}

By comparing (\ref{simplify_u_i}) and (\ref{simplify_u_i2}), we can derive the specific formula for \( f_2^i(k) \) from the coefficient of \( \tau_i \) as follows:
\begin{equation}\label{f2}
    f_2^i(k)=-\frac{\overline x(t)}{2(1- \overline x(t)\delta-\rho)}\stackrel{\text { def }}{=}f_2(k), \quad i\in N.
\end{equation}

For \( f_1^i(k) \), we have:
\begin{equation}\label{f1}
    f_1^i(k)= \frac{ a_i - \sum_{j \neq i}^n \left( f_1^j(k) + f_2^j(k) \overline\tau_j \right)}{2}.
\end{equation}

Subtracting \( \frac{f_1^i(k)}{2} \) from (\ref{f1}), we obtain:
\begin{equation}\label{f11}
    f_1^i(k)=a_i-\sum_{j=1}^n f_1^j(k)-\sum_{ j\neq i}^n f_2(k)\overline\tau_j.
\end{equation}

Summing over \( i=1,2,...,n\), we get:
\begin{equation*}
    \sum_{i=1}^n{f_1^i(k)}={a}-n{\sum_{j=1}^n f_1^j(k)}-{f_2(k)(n-1)\sum_{j=1}^n\overline\tau_j},
\end{equation*}
where \( a= \sum_{i=1}^n a_i \).

Thus, we find:
\begin{equation}\label{sum}
    \sum_{i=1}^n f_1^i(k)=\frac{a}{n+1}+\frac{(n-1)\overline x(t)\sum_{j=1}^n\overline\tau_j}{2(n+1)(1- \overline x(t)\delta)}.
\end{equation}

Substituting (\ref{sum}) into (\ref{f11}), we obtain:
\begin{equation*}\label{f1final}
f_1^i(k)=a_i-\left(\frac{a}{n+1}+\frac{(n-1)\overline x(t)\sum_{j=1}^n\overline\tau_j}{2(n+1)(1- \overline x(t)\delta-\rho)}\right)+\frac{\overline x(t)}{2(1- \overline x(t)\delta-\rho)}\sum_{ j\neq i}^n \overline\tau_j.
\end{equation*}

Let \( \overline c(t) \) denote \( -\frac{\overline x(t)}{1- \overline x(t)\delta-\rho} \), then we define:
\begin{gather*}
     f_2^i(k)=\frac{ \overline c(t)}{2},\\
    f_1^i(k)=a_i-\frac{a}{n+1}+\frac{ \overline c(t)(n-1)\sum_{j=1}^n \overline\tau_j}{2(n+1)}-\frac{ \overline c(t)}{2}\sum_{j\neq i}^n \overline\tau_j.
\end{gather*}

Since
\[
\sum_{j \neq i}^n \overline{\tau}_j = \sum_{j=1}^n \overline{\tau}_j - \overline{\tau}_i,
\]
 we can rewrite \( f_1^i(k) \) and substitute it into (\ref{simplify_u_i}), and then let $k=t$ to obtain the final form of the Nash equilibrium with continuous Bayesian updating.

\end{document}